\definecolor{darkgreen}{rgb}{0.0,0,0.9}
\newtcolorbox{wbox}
{
	colback  = white,
}
\newcommand*{\Qplus}{\mathbb{Q_+}}
\newcommand*{\suppress}[1]{}
\newcommand*{\norm}[1]{\|#1\|}
\newcommand*{\cN}{\mathcal{N}}
\newcommand*{\cR}{\mathcal{R}}
\def\thm@space@setup{%
	\thm@preskip= 10pt
	\thm@postskip=\thm@preskip 
}
\renewcommand{\paragraph}{%
	\@startsection{paragraph}{4}%
	{\z@}{5pt}{-1em}%
	{\normalfont\normalsize\bfseries}%
}
\newtheorem{theorem}{Theorem}
\newtheorem{lemma}{Lemma}
\newtheorem{note}{Note}
\theoremstyle{definition}
\newtheorem{definition}{Definition}
\newtheorem{remark}{Remark}
\newtheorem{alg}{Algorithm}
\newenvironment{fminipage}%
{\begin{Sbox}\begin{minipage}}%
		{\end{minipage}\end{Sbox}\fbox{\TheSbox}}
\def\norm#1{\left\| #1 \right\|}
\newcommand{\CR}{\mbox{${\cal R}$}}
\newcommand{\worth}{\mbox{\rm worth}}
\newcommand{\surplus}{\mbox{\rm surplus}}
\newcommand{\mn}{\mbox{\textsc{min}}}
\title{Arctic Auctions, Linear Fisher Markets, \\
and Rational Convex Programs}
\author[1]{Vijay V.~Vazirani\footnote{Supported in part by NSF grant CCF-2230414.}}
\affil[1]{University of California, Irvine}
\begin{document}
	\maketitle

\begin{abstract}
This paper unifies two foundational constructs from economics and algorithmic game theory—the Arctic Auction and the linear Fisher market—to address the efficient allocation of differentiated goods in complex markets. We focus on the \emph{Arctic Auction}~\cite{Klemperer1,Klemperer2}, a mechanism designed to address the challenge of allocating liquidity efficiently across banks pledging heterogeneous collateral of varying quality, and was used for this purpose by the Bank of England and IMF, among other institutions. 

Our main contribution is to show that an equilibrium for the Arctic Auction is captured by a \emph{Rational Convex Program} (RCP)~\cite{Va.rational}, thereby revealing deep structural regularities between market design and convex optimization. Building directly on primal–dual techniques for the linear Fisher market~\cite{DPSV}, we derive the first combinatorial polynomial-time algorithm for computing Arctic Auction equilibria. This result establishes that the Arctic Auction, long valued for its practical success, also enjoys strong algorithmic and theoretical foundations.
\end{abstract}


\bigskip
\bigskip
\bigskip
\bigskip
\bigskip
\bigskip
\bigskip
\bigskip
\bigskip
\bigskip
\bigskip
\bigskip
\bigskip
\bigskip
\bigskip
\bigskip
\bigskip
\bigskip
\bigskip

\pagebreak


\clearpage
\pagenumbering{arabic}

\section{Introduction}
\label{sec.intro}


This paper brings together two foundational notions from economics and algorithmic game theory (AGT), namely Arctic Auctions and linear Fisher markets, to address the challenge of efficiently allocating differentiated goods in complex markets. 

We focus on Klemperer's {\bf Arctic Auction} \cite{Klemperer2}, a novel and powerful mechanism designed to solve resource allocation problems, notably those encountered in central banking. Our core technical contribution is to establish a fundamental link between this auction design and optimization theory: we show that the equilibrium of the Arctic Auction is captured by a {\bf Rational Convex Program (RCP)} \cite{Va.rational}. Next, by building on primal-dual techniques developed by Devanur, Papadimitriou, Saberi, and Vazirani for the {\bf linear Fisher market} \cite{DPSV}, we are able to derive the {\bf first combinatorial polynomial-time algorithm} for computing an equilibrium for the Arctic Auction. This work not only provides an exact computational solution for a mechanism of immediate practical importance but also solidifies the theoretical connection between modern market design and convex optimization. 


The {\bf Product-Mix Auction} of Klemperer \cite{Klemperer1, Klemperer2, Klemperer-2008} is a general sealed-bid mechanism that computes competitive-equilibrium allocations across multiple substitutable goods by solving an optimization that maximizes total surplus, where bidders submit mutually exclusive bids across product categories and the mechanism determines market-clearing prices and allocations simultaneously. The design of the Product-Mix Auction was motivated by the 2007--2008 financial crisis, when {\bf the Bank of England} faced the challenge of allocating liquidity efficiently across banks pledging heterogeneous collateral of varying quality. The Arctic Auction\footnote{The name "Arctic" was given to such auctions since the issue of bidders  not spending all their money first arose in Iceland's planned auction \cite{Klemperer2}.} is a specialized implementation of the product-mix auction designed for central bank liquidity operations, with the key difference being that it handles settings where bidders (banks) may choose not to spend all their money if prices exceed their valuation thresholds, and it remains computationally stable even when some asset markets "freeze". Variants of the Arctic Auction have since been adopted in policy modeling contexts by the {\bf IMF} and other institutions as a general tool for crisis-time resource allocation when assets are imperfect substitutes and traditional market-clearing mechanisms fail.

The linear Fisher market was the first market model for which an exact efficient algorithm was given \cite{DPSV}; henceforth we will call this the {\bf DPSV algorithm}. Equilibrium allocation and prices for this model are captured by the celebrated Eisenberg-Gale convex program \cite{eisenberg}. However, this program was not sufficient to derive an efficient algorithm for computing an  {\bf exact equilibrium} --- that needed an algorithm for diophantine approximation which was given later in \cite{Jain2007polynomial}. The Arctic Auction can be viewed as a variant of the linear Fisher market in which buyers need not spend all their money and the utility of buyer $i$ for good $j$ gives an {\em upper bound on the price} at which $i$ will be happy to buy good $j$, i.e., if the price of $j$ equals this bound, $i$ is indifferent between getting money back or buying this good and if it exceeds this bound, $i$ will not buy this good \cite{Klemperer2}.  

We note that incorporating such upper bounds in the DPSV algorithm is far from straightforward for the following basic reason: The crucial property which yields efficient computability for a linear Fisher market is {\bf weak gross substitutability}, i.e., if the demand of one good increases, the (equilibrium) price of another good cannot decrease. This property supports an algorithm which monotonically raises prices of goods, i.e., it never needs to reduce any price. Prices are dual variables of Eisenberg-Gale convex program, and it is well-known that primal-dual algorithms either raise or lower duals monotonically to achieve efficiency; schemes that raise {\em and} lower dual variables have not panned out in the past\footnote{The only exception we are aware of is Edmonds' weighted matching algorithm \cite{Edmonds.matching}.}. For the Arctic Auction, when the price of some good $j$ exceeds buyer $i$'s upper bound, $j$ should be returned to $i$. But in doing so, we will be forced to lower the price of good $j$, since its demand has decreased. The full picture is even more complicated: at any point in the algorithm, different goods have different prices, possibly unrelated to buyer $i$'s upper bounds, and for the same good, the upper bounds of different buyers are different. 

Notwithstanding these challenges, we proceeded to study a convex program appearing in Chen et al. \cite{Chen-Ye}. On applying KKT conditions, we realized that not only it captures equilibrium allocations and prices for the Arctic Auction but it is also an RCP, Theorems \ref{thm.arctic} and \ref{thm.RCP}, respectively. These are exceptionally strong structural properties, and there must be an underlying reason for their existence. That reason turned out to be the fact that in the definition of Arctic Auction, the {\bf upper bound on the price} at which buyer $i$ will be happy to buy a good $j$ {\bf is precisely her utility}, $u_{ij}$, for good $j$. The proof of Theorem \ref{thm.arctic} provided two more clues: the importance of the notion of {\em bang per buck}, Definition \ref{def.bpb}, and the importance of the event $\alpha_i = 1$, where $\alpha_i$ is the bang per buck of buyer $i$. 

These facts dramatically changed the outlook---they brought hope despite the complicated picture painted above. Indeed, we believe that if the upper bound at which buyer $i$ will be happy to buy a good $j$ is not {\bf exactly} $u_{ij}$, no convex program would capture an equilibrium of the resulting market. The fact that convex program (\ref{eq.arctic}) is an RCP encouraged us to seek a combinatorial polynomial-time algorithm for Arctic Auction. It was natural to  extend the DPSV Algorithm. However, this algorithm uses the sophisticated machinery of {\bf balanced flows}. Adapting these ideas to the Arctic Auction, i.e., arranging the return of  money to buyers without violating any of the conditions, is quite non-trivial, see Theorem \ref{thm.return}. 

We note that this happy outcome is further testament to the exceptional skill that went into the design of Arctic Auction (and lucky breaks that yielded critical insights which eventually blended well with the DPSV algorithm).

This raises an interesting question: Is this happy outcome a one-off? I.e., can RCP and combinatorial polynomial-time algorithms go much further? We believe the answer is ``yes''. As evidence, in Section \ref{sec.cost} we present a linear Fisher market model with constant marginal costs for production. We show that an optimal solution to this model is captured by an RCP and we give a simple combinatorial polynomial-time algorithm for it.

\section{Related Works}
\label{sec.related}

The notion of Product-Mix Auction was given by Klemperer \cite{Klemperer1, Klemperer2}. The Arctic Auction was also introduced by Klemperer, as a specialized implementation of the product-mix auction, designed for central bank liquidity operations \cite{Klemperer2}. \cite{Fisher2011pricing} gave an early operational playbook showing how central banks price liquidity across collateral classes via Product-Mix Auctions. Fichtl \cite{Fichtl-computing} gave a proof of correctness of an algorithm proposed by Paul Klemperer and DotEcon for finding equilibrium prices for the Arctic Auction if the number of goods is very small, e.g., three. The algorithm was explicitly designed for solving a real practical problem in this regime, even though it is exponential time in the worst case. It maximizes auctioneer surplus (a harder goal) as opposed to our algorithm which maximizes total surplus. 

\cite{Baldwin2019understanding} analyzed the existence of equilibrium in Product-Mix Auctions when goods are indivisible. They introduce the concept of "demand types" to characterize solutions when standard assumptions break down. \cite{BKL2024} developed polynomial-time algorithms to find competitive equilibrium prices and quantities for bidders in strong-substitutes product-mix auctions by using submodular minimization for price-finding and a novel constrained matching approach for supply allocation, directly utilizing the auction's explicit bidding language. \cite{Baldwin-Lang} established that the Product-Mix Auction's simple geometric bidding language uniquely represents all concave substitutes and strong-substitutes valuations, thereby providing a new characterization for these preference classes and ensuring the auction implements competitive equilibrium allocations.

\cite{Lock-Finster2023competitive} study pricing in quasi-linear, budget-constrained multi-good markets and show that the elementwise-minimal competitive equilibrium prices are simultaneously seller revenue maximizing and constrained welfare maximizing, a structural result obtained without assuming gross substitutability or homogeneity. Their work is complementary to ours: while they characterize the geometry and optimality of equilibrium price vectors in markets with budgets, our paper develops an algorithmic framework for computing such equilibria for the Arctic Auction. 

In the conference version, \cite{DPSV-FOCS} proved that their algorithm for the linear Fisher market terminates with a rational equilibrium if all parameters in the given instance are rational numbers. A corollary of this result is that the Eisenberg-Gale convex program, which captures equilibrium allocations and prices for the linear Fisher market, always has rational optimal solutions. A direct proof of this fact was later given in \cite{Va.combinatorial}. 

Hence the Eisenberg-Gale convex program, a non-linear program, ``behaves'' like a linear program, an intriguing phenomenon indeed. Ye \cite{Ye-Rational} gave other convex programs which exhibit this behavior. Drawing analogy with a fundamental phenomenon in combinatorial optimization, namely the existence of LP-relaxations for some problems which always admit integral optimal solutions, e.g., for maximum matching and max-flow \cite{LP.book, Sch-book} (i.e., these LPs ``behave'' like integer programs), \cite{Va.rational} formally defined the notion of a Rational Convex Program (RCP), see Definition \ref{def.RCP}. The importance of this notion is that the existence of such a program gives a strong indication that the problem admits a combinatorial polynomial-time algorithm. 

RCPs have been found for several other market models, e.g., for the dichotomous-utilities case of the linear Arrow-Debreu Nash bargaining one-sided matching market \cite{GTV-MM}; 2-player Nash and nonsymmetric bargaining games \cite{Va.2-player-rational}; and a perfect price discrimination market model with production \cite{GoelV}. \cite{Devanur2016rational} gave an RCP for the linear Arrow--Debreu market. \cite{Cole-convex} gave new convex programs for natural generalizations of the linear Fisher market, in particular for the quasi-linear case; however, their program is different from the one studied in Section \ref{sec.RCP}. Although they established rationality of equilibria for some markets, they did not explore the notion of rational convex programs. 

Several market equilibrium works utilize the flow-based technique introduced in \cite{DPSV}. Building on \cite{DPSV}, Orlin gave a strongly polynomial time algorithm for computing an equilibrium for the linear Fisher market \cite{Orlin-Fisher}. \cite{Duan2015combinatorial} gave a combinatorial polynomial algorithm for the linear Arrow-Debreu market and \cite{Duan-GM} gave an algorithm with an improved running time.  \cite{VY-2025} used these ideas to give a strongly polynomial algorithm for computing the Hylland-Zeckhauser equilibrium \cite{Hylland} for the case that each agent's utilities come from a bi-valued set. \cite{GTV-MM} gave an $\epsilon$-approximate equilibrium algorithm for the Arrow-Debreu extension of the classic Hylland-Zeckhauser mechanism \cite{Hylland} for a one-sided matching market. \cite{JV-Eisenberg} define the notion of Eisenberg–Gale markets and give efficient algorithms for several resource allocation markets in this class. \cite{Va.spending} gave a polynomial time algorithm for computing an equilibrium for the spending constraint market model.

\section{Rational Convex Program for Arctic Auction}
\label{sec.RCP}

\begin{definition}
	\label{def.arctic}
(Klemperer \cite{Klemperer2})
An {\em Arctic Auction} is a variant of the linear Fisher market in which buyers need not spend all their money and the utility of buyer $i$ for good $j$ gives an {\em upper bound on the price} at which $i$ will be happy to buy good $j$. The market consists of a set $B$ of $n$ buyers, with buyer $i$ having money $m_i \in \Qplus$, and a set $G$ of $m$ divisible goods, w.l.o.g. one unit of each. The buyers have linear utilities for goods; assume $u_{ij} \in \Qplus$ is the utility of buyer $i$ for a unit of good $j$. If prices of all goods $j$  (weakly) exceed $u_{ij}$ then $i$ may be returned a part or all of her money; the money returned is denoted by $s_i$. If $x_{ij}$ is the amount of good $j$ allocated to buyer $i$, then her total utility from this bundle is:
\[ u_i(x, s_i) = \sum_{j \in G} {u_{ij} x_{ij}} + s_i  = w_i(x_i) + s_i, \] 
where $x_i$ is the restriction of $x$ to buyer $i$ and $w_i(x_i) = \sum_{j \in G} {u_{ij} x_{ij}}$. If $p_j$ is the price of one unit of good $j$ then the worth of this bundle is $\sum_{j \in G} {x_{ij} p_j} + s_i$.

We will make the mild assumption that each buyer has positive utility for some good and each good is desired by some buyer. Under these conditions, {\em equilibrium allocations and prices exist}, satisfying:
 \begin{enumerate}
 	\item Each buyer gets a {\em utility maximizing bundle}.
 	\item The {\em market clears}, i.e., all goods are sold and all money is either spent or returned. (In addition, all prices will be positive.) 
 \end{enumerate}
 \end{definition}
 
 \begin{note}
 	\label{note: rational}
 All parameters in all models defined in this paper, e.g., Definition \ref{def.arctic}, will be assumed to be non-negative rational numbers, i.e., belonging to $\Qplus$. 
 \end{note}

\begin{definition}
	\label{def.bpb}
W.r.t. prices $p$, define the {\em maximum bang per buck} of buyer $i$ to be 
\[ \alpha_i = \max_{j \in G} {\left\{ {u_{ij} \over {p_j}} \right\}}  . \]
Furthermore, if the above equality holds for good $j$, then we will say that $j$ is a {\em maximum bang per buck  good for $i$}. For the sake of succinctness we will say that $\alpha_i$ {\em  is the mbpb of buyer $i$ and $j$ is an mbpb good for $i$}.
\end{definition}
	
Since the utility function of each buyer is linear, it is easy to show that her utility maximizing bundle must contain only mbpb goods. Lemma \ref{lem.facts} follows easily from Definition \ref{def.arctic}.

\begin{lemma}
	\label{lem.facts}
Equilibrium allocations and prices for the Arctic Auction satisfy the following conditions for each buyer $i \in B$.  

\begin{enumerate}
	\item If $\exists j \in G$ s.t. $x_{ij} > 0$ then $j$ is an mbpb good for $i$. Furthermore, since $p_j \leq u_{ij}$ (because $u_{ij}$ is an upper bound on the price at which $i$ is happy to buy $j$), $\alpha_i \geq 1$.   
	\item If $s_i > 0$ and $\exists j \in G$ s.t. $x_{ij} > 0$ then $\alpha_i = 1$ and each good is bought by $i$ at the upper bound of its price.
	\item If $\forall j \in G, \ x_{ij} = 0$ then $\forall j \in G, \ p_j \geq u_{ij}$, i.e., the price of each good is at least its upper bound, and hence $\alpha_i \leq 1$.      
\end{enumerate}
\end{lemma}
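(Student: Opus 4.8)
The plan is to reduce all three claims to the first-order optimality conditions of a single buyer's demand problem. Fix a buyer $i$. At equilibrium she holds a utility-maximizing bundle, i.e.\ $(x_i,s_i)$ maximizes $\sum_{j\in G} u_{ij}x_{ij}+s_i$ over $x_i\ge 0$, $s_i\ge 0$ subject to $\sum_{j\in G}p_j x_{ij}+s_i\le m_i$; since $s_i$ appears in the objective with coefficient $1>0$, any budget slack could be shifted into $s_i$, so we may take this constraint tight — which is also precisely the market-clearing requirement that all money be spent or returned. The one useful observation is to regard the returned money $s_i$ as the amount of a fictitious good $0$ with price $p_0:=1$ and utility $u_{i0}:=1$. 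The demand problem then becomes a standard linear-utility problem over $G\cup\{0\}$, and by the usual exchange argument its optimal solutions are exactly the budget-exhausting allocations supported on goods attaining the maximum bang per buck $\beta_i:=\max\{1,\alpha_i\}$: if some good whose bang per buck is strictly below $\beta_i$ received positive allocation, shifting an $\varepsilon$-worth of money from it to a good attaining $\beta_i$ would strictly raise utility.

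Given this, the three parts fall out. For part~1, if $x_{ij}>0$ for a real good $j$, then $j$ attains $\beta_i$, so $u_{ij}/p_j=\beta_i\ge 1$; since also $u_{ij}/p_j\le\alpha_i\le\beta_i$, equality holds throughout, whence $u_{ij}/p_j=\alpha_i$ (so $j$ is an mbpb good for $i$) and $\alpha_i=u_{ij}/p_j\ge 1$, equivalently $p_j\le u_{ij}$. For part~2, if in addition $s_i>0$ then the fictitious good $0$ is positively allocated, so $1=u_{i0}/p_0=\beta_i\ge\alpha_i$; combined with $\alpha_i\ge 1$ from part~1 this forces $\alpha_i=1$, and then every real good $j$ with $x_{ij}>0$ satisfies $u_{ij}/p_j=\beta_i=1$, i.e.\ $p_j=u_{ij}$, so it is bought at the upper bound of its price. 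For part~3, if $x_{ij}=0$ for all real $j$ then the whole budget is returned, $s_i=m_i$; invoking the mild assumption $m_i>0$ (the case $m_i=0$ being vacuous), good $0$ is positively allocated, so $\beta_i=1$ and hence $\alpha_i\le 1$; unpacking $\alpha_i=\max_j u_{ij}/p_j\le 1$ yields $p_j\ge u_{ij}$ for every good $j$.

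The only places that need care are the exchange argument itself (together with the accompanying remark that the budget is exhausted, which is exactly why it matters that $s_i$ enters the objective explicitly) and the bookkeeping that keeps ``$j$ is an mbpb good'' referring to a genuine good of $G$ while ``$\alpha_i=1$'' records the event that returning money is simultaneously optimal. I do not expect a real obstacle here: the statement is essentially a transcription of the complementary-slackness conditions of a linear program, which is what is meant by its following easily from Definition~\ref{def.arctic}.
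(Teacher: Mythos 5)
Your proof is correct, and it fills in exactly the kind of argument the paper has in mind when it says Lemma~\ref{lem.facts} ``follows easily from Definition~\ref{def.arctic}'': the paper does not actually write out a proof of this lemma (its detailed argument appears only for Lemma~\ref{lem.arctic}, which re-derives the same three statements from the KKT conditions of convex program~(\ref{eq.arctic}), a different route). Your device of adjoining a fictitious good~$0$ with $p_0=u_{i0}=1$, so that the buyer's demand problem becomes a standard linear-utility problem over $G\cup\{0\}$ with $\beta_i=\max\{1,\alpha_i\}$ as the operative bang-per-buck, cleanly packages the exchange argument and makes all three parts immediate. Two small remarks: first, in part~3 you rightly need $m_i>0$, and that is already guaranteed by the paper's convention $m_i\in\Qplus$ (Note~\ref{note: rational}); second, it is worth being explicit (as you are) that ``budget exhaustion'' here is not an extra hypothesis to be argued from optimality alone but is also what Definition~\ref{def.arctic} stipulates under market clearing, so either justification suffices. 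No gaps.
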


We next study a variant of the Eisenberg-Gale convex program, given by Chen et al. \cite{Chen-Ye}:

	\begin{maxi}
		{}  {(\sum_{i \in B}  {m_i (\log {(w_{i} (x_i)} + s_i)}) - s}  
			{\label{eq.arctic}}
		{}
		\addConstraint{\sum_{i \in B} {x_{ij}}}{\leq 1 \quad}{\forall j \in G}
		\addConstraint{{\sum_{i \in B} {s_{i}}} -s } {= 0}
		\addConstraint{x_{ij}}{\geq 0}{\forall i \in B, j \in G}
		\addConstraint{s_{i}}{\geq 0}{\forall i \in B}
	\end{maxi}

Let $p_j$ and $\lambda$ be the dual variables corresponding to the first and second constraints of (\ref{eq.arctic}).  Optimal solutions to the primal and dual variables must satisfy KKT conditions, in addition to the constraints of (\ref{eq.arctic}):

\begin{enumerate}
	\item $\forall j \in G: \ \ p_j \geq 0$.
	\item $\forall j \in G: \ \ p_j > 0 \ \ \implies \ \ \sum_{i \in B} {x_{ij}} = 1 .$ 
		\item $1 - \lambda \geq 0$.
	\item $s > 0 \ \ \implies \ \  \lambda = 1$ 
	\item $$ \forall \ i \in B,  \forall j \in G: \ \  {{u_{ij}} \over {p_j}} \leq {{w_i(x_i) + s_i} \over {m_i}} . $$
	\item $$ \forall \ i \in B,  \forall j \in G: \ \ x_{ij} > 0 \ \ \implies \ \ {{u_{ij}} \over {p_j}} = {{w_i(x_i) + s_i} \over {m_i}} . $$
	\item $$ \forall \ i \in B: \ \ \lambda \geq {{m_i} \over {w_i(x_i) + s_i}} .$$
	\item $$ \forall \ i \in B: \ \ s_i > 0 \ \ \implies \ \  \lambda = {{m_i} \over {w_i(x_i) + s_i}} .$$
\end{enumerate}

\begin{lemma}
	\label{lem.arctic}
Optimal primal and dual solutions to convex program (\ref{eq.arctic}) satisfy the statements given in Lemma \ref{lem.facts}. 
\end{lemma}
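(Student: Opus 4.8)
The goal is to derive the three conditions of Lemma~\ref{lem.facts} purely from the KKT conditions 1--8 listed for convex program (\ref{eq.arctic}), identifying the dual variable $\lambda$ with the parameter that controls money return and recognizing that $(w_i(x_i)+s_i)/m_i$ plays the role of $1/\alpha_i$-type quantity. The plan is to first observe, from KKT condition~5, that for every buyer $i$ we have $\alpha_i = \max_j u_{ij}/p_j \leq (w_i(x_i)+s_i)/m_i$, and from KKT condition~6 that equality $u_{ij}/p_j = (w_i(x_i)+s_i)/m_i$ holds whenever $x_{ij} > 0$. So define $\beta_i := (w_i(x_i)+s_i)/m_i$; then $\alpha_i \leq \beta_i$ always, and any good allocated to $i$ is an mbpb good with $u_{ij}/p_j = \beta_i = \alpha_i$. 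This immediately gives the ``mbpb good'' half of part~1.

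\textbf{Part 1.} It remains to show $\alpha_i \geq 1$ when some $x_{ij} > 0$. Here I would use the market-clearing / worth identity implicit in the primal optimality: summing the per-buyer budget relation (each buyer's money is spent or returned, $\sum_j x_{ij}p_j + s_i = m_i$, which follows from complementary slackness on the goods constraints together with the structure of (\ref{eq.arctic})) and using that allocated goods satisfy $u_{ij}/p_j = \beta_i$, one gets $w_i(x_i) = \beta_i \sum_j x_{ij} p_j$, hence $w_i(x_i) + s_i = \beta_i(\sum_j x_{ij}p_j) + s_i$. Combined with $\sum_j x_{ij} p_j + s_i = m_i$ and $\beta_i = (w_i(x_i)+s_i)/m_i$, algebra forces either $\beta_i = 1$ (when $s_i$ contributes) or $\beta_i \geq 1$; more directly, $\beta_i \geq 1 \iff w_i(x_i)+s_i \geq m_i \iff \beta_i \sum_j x_{ij}p_j + s_i \geq \sum_j x_{ij}p_j + s_i \iff (\beta_i - 1)\sum_j x_{ij} p_j \geq 0$, which holds since prices are nonnegative (KKT~1) and some $x_{ij}>0$ with $p_j>0$ (else the good is free and by KKT~2 need not clear—here I'd invoke that an allocated good has positive price in equilibrium, or argue $p_j>0$ from the mild assumption that every good is desired). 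Thus $\beta_i \geq 1$, and since $\alpha_i = \beta_i$ in this case, $\alpha_i \geq 1$.

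\textbf{Parts 2 and 3.} For part~2, assume $s_i > 0$ and some $x_{ij} > 0$. From KKT~8, $s_i > 0 \implies \lambda = m_i/(w_i(x_i)+s_i) = 1/\beta_i$; and from KKT~4, $s > 0 \implies \lambda = 1$, but more robustly $s_i>0$ already gives $\lambda = 1/\beta_i$, while the argument in part~1 (the inequality $(\beta_i-1)\sum_j x_{ij}p_j \geq 0$) together with KKT~3 ($\lambda \leq 1$, i.e. $\beta_i \geq 1$) and KKT~7 in the reverse direction pins $\beta_i = 1$; hence $\alpha_i = \beta_i = 1$, and then $u_{ij}/p_j = 1$ for every allocated $j$, i.e. each good bought by $i$ is bought exactly at its price upper bound $u_{ij}$. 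For part~3, if $x_{ij} = 0$ for all $j$, then $w_i(x_i) = 0$, so $\beta_i = s_i/m_i$; market clearing forces $s_i = m_i$ (the buyer spends nothing, so all her money is returned), giving $\beta_i = 1$; then KKT~5 yields $u_{ij}/p_j \leq \beta_i = 1$ for all $j$, i.e. $p_j \geq u_{ij}$ for every good, and $\alpha_i \leq 1$.

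\textbf{Main obstacle.} The delicate point is establishing the per-buyer budget identity $\sum_j x_{ij} p_j + s_i = m_i$ and that allocated goods have strictly positive price—these are ``equilibrium'' facts not listed verbatim among KKT~1--8, so I would need to derive them from the primal structure (the objective $\sum_i m_i \log(w_i(x_i)+s_i) - s$ forces each $w_i(x_i)+s_i > 0$, hence the log-gradient relations hold with equality on the support) together with complementary slackness on the goods constraints and the normalization $\sum_i s_i = s$. Handling the degenerate case where an allocated good could a priori have price zero (making $\alpha_i$ infinite) requires the mild assumption that every good is desired by some buyer plus a short argument that in any optimal solution all prices are positive; this is the step I expect to require the most care.
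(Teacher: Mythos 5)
Your decomposition into $\alpha_i \leq \beta_i := (w_i(x_i)+s_i)/m_i$ with equality on the support of $x_i$ (via KKT 5 and 6) is correct and matches the paper's opening move. But there are two genuine gaps, both stemming from overlooking the direct use of KKT conditions (3) and (7).

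\textbf{Part 1 is circular.} Your chain of equivalences reduces $\beta_i \geq 1$ to $(\beta_i - 1)\sum_j x_{ij}p_j \geq 0$, and then you assert this holds ``since prices are nonnegative and some $x_{ij} > 0$ with $p_j > 0$.'' But nonnegativity of prices and positivity of $\sum_j x_{ij}p_j$ only tell you the second factor is positive; they say nothing about the sign of $(\beta_i - 1)$, which is precisely what you are trying to prove. The paper closes this directly from the listed KKT conditions: condition (7) gives $\lambda \geq m_i/(w_i(x_i)+s_i)$, i.e.\ $\beta_i \geq 1/\lambda$, and condition (3) gives $1/\lambda \geq 1$, so $\alpha_i = \beta_i \geq 1$. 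No budget identity, no positivity of prices, no appeal to market clearing is needed at this stage. (Interestingly, you half-invoke this argument in your Part 2 parenthetical ``KKT~3 ($\lambda \leq 1$, i.e.\ $\beta_i \geq 1$),'' so you have the ingredients — they just belong in Part 1.)

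\textbf{Part 3 relies on facts not yet available.} You invoke ``market clearing forces $s_i = m_i$'' when $x_{ij}=0$ for all $j$. But market clearing is not among KKT conditions 1--8; it is a consequence of them, established in the paper only in the subsequent proof of Theorem~\ref{thm.arctic}, and that proof in turn \emph{uses} Lemma~\ref{lem.arctic}. The paper's route for Statement 3 is again purely via the KKT list: from $w_i(x_i)=0$ and conditions (3), (7), $s_i/m_i \geq 1/\lambda \geq 1$, so $s_i > 0$ and $s>0$; then (4) gives $\lambda = 1$ and (8) gives $s_i/m_i = 1/\lambda = 1$, i.e.\ $s_i = m_i$. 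The ``main obstacle'' you flag (deriving the per-buyer budget identity and positive prices) is thus self-inflicted: this lemma does not need either, and they are only established later, for Theorem~\ref{thm.arctic}.

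The fix is to replace the budget-identity argument in Parts 1 and 3 by the short KKT-(7)-plus-(3) inequality $\beta_i \geq 1/\lambda \geq 1$, and to derive $s_i = m_i$ in Part 3 from KKT (4) and (8) rather than from market clearing. With those changes the proof aligns with the paper's.
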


\begin{proof}

{\bf Statement 1:}  
Assume  $x_{ik} > 0$ for $k \in G$. Then by (6), 
\[ {{u_{ik}} \over {p_k}} = {{w_i(x_i) + s_i} \over {m_i}} . \]
By (5), 
\[ \forall j \in G: {{u_{ij}} \over {p_j}} \leq  {{w_i(x_i) + s_i} \over {m_i}} , \]
Therefore $k$ is an mbpb good for $i$ and 
\[ {{w_i(x_i) + s_i} \over {m_i}} = \alpha_i . \]
By (7), 
\[ {{w_i(x_i) + s_i} \over {m_i}} \geq  {1 \over \lambda} . \]
By condition (3), $1/\lambda \geq 1$. Hence we get that 
\[ \alpha_i \geq  {1 \over \lambda} \geq 1 .\]
Therefore if $x_{ij} > 0$, 
\[ {{u_{ij}} \over {p_j}} = \alpha_i \geq 1 . \]
Hence $p_j \leq u_{ij}$, i.e., the price of $j$ does not exceed $u_{ij}$.

\medskip

{\bf Statement 2:}  
Assume that $s_i > 0$ and $x_{ik} > 0$ for $k \in G$. Clearly $s > 0$, hence by condition (4), $\lambda = 1$. 

By Statement (1) and condition (8),
\[ \alpha_i = {{w_i(x_i) + s_i} \over {m_i}} = {1 \over {\lambda}} = 1 . \] 
Now for each good $j$, if $x_{ij} > 0$ then $u_{ij}/p_j = 1$, i.e., $j$ is bought at the upper bound of its price.

{\bf Statement 3:}  
Assume that $\forall j \in G, \ x_{ij} = 0$. If so, $w_i(x_i) = 0$ and by (3) and (7), 
\[ {s_i \over m_i}  \geq {1 \over \lambda} \geq 1 .\] 
Therefore $s_i > 0$ and hence $s > 0$. Now by (4), $\lambda = 1$. Further, by (8), $s_i = m_i$. Now by (5), $\forall j \in G, \ p_j \geq u_{ij}$, i.e., the price of each good is at least its upper bound and hence $\alpha_i \leq 1$.  
\end{proof}

\begin{theorem}
	\label{thm.arctic}
Optimal primal and dual solutions to convex program (\ref{eq.arctic}) give equilibrium allocations and prices for the arctic auction. 
\end{theorem}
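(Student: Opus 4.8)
The plan is to show that an optimal solution $(x, s, p, \lambda)$ to convex program (\ref{eq.arctic}) satisfies the two equilibrium conditions of Definition \ref{def.arctic}: (i) each buyer receives a utility-maximizing bundle, and (ii) the market clears with all prices positive. Lemma \ref{lem.arctic} already establishes that such an optimal solution obeys the three structural statements of Lemma \ref{lem.facts}, so the bulk of the work is to convert those statements, together with the KKT conditions, into the economic equilibrium guarantees. I would treat the two conditions in turn, handling the positivity of prices as a preliminary step since it is needed for the bang-per-buck ratios to be well-defined.

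First I would argue that every price $p_j$ is strictly positive. Suppose $p_j = 0$ for some good $j$. By the mild assumption in Definition \ref{def.arctic}, some buyer $i$ has $u_{ij} > 0$, so the ratio $u_{ij}/p_j$ is unbounded, contradicting KKT condition (5) which bounds it by the finite quantity $(w_i(x_i)+s_i)/m_i$ (note $w_i(x_i)+s_i > 0$ is forced: if it were $0$ then $m_i/(w_i(x_i)+s_i)$ in condition (7) is infinite, again impossible). Hence all prices are positive, and then KKT condition (2) gives $\sum_i x_{ij} = 1$ for every good, i.e., all goods are fully sold. Combined with the equality constraint $\sum_i s_i = s$, this shows the market clears in the sense that all goods are allocated and all money is accounted for (spent or returned).

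Next I would verify that each buyer gets a utility-maximizing bundle. Fix buyer $i$. A utility-maximizing bundle, given prices $p$ and the Arctic rule, must (by the reasoning preceding Lemma \ref{lem.facts}) consist only of mbpb goods whenever $\alpha_i \geq 1$, and must consist entirely of returned money when $\alpha_i < 1$; when $\alpha_i = 1$ the buyer is indifferent. I would use the three cases of Lemma \ref{lem.facts} (now available via Lemma \ref{lem.arctic}): in Statement 1's regime, $x_{ij} > 0 \implies j$ is mbpb for $i$ with $u_{ij}/p_j = \alpha_i \geq 1$, so the bundle meets the optimality criterion; in Statement 2's regime, $\alpha_i = 1$ and goods are bought exactly at their price-upper-bounds, so the buyer is genuinely indifferent between the goods and returned money, hence the allocation is optimal; in Statement 3's regime, $x_{ij} = 0$ for all $j$, $s_i = m_i$, and all prices weakly exceed the upper bounds ($\alpha_i \leq 1$), so returning all the money is optimal. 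One should also check the worth of each buyer's bundle equals $m_i$: when $s_i$ is returned and the rest is spent, $\sum_j x_{ij} p_j + s_i = m_i$ follows from combining KKT condition (6) (which gives $p_j x_{ij} = u_{ij} x_{ij} \cdot (m_i/(w_i(x_i)+s_i))$) summed over $j$, together with conditions (7)/(8) relating $\lambda$, $m_i$, and $w_i(x_i)+s_i$ — a short algebraic manipulation.

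The main obstacle I anticipate is the budget/worth bookkeeping in the borderline case $\alpha_i = 1$ with $0 < s_i$ and some $x_{ij} > 0$ simultaneously: one must confirm that the total worth $\sum_j x_{ij} p_j + s_i$ of buyer $i$'s bundle is exactly $m_i$ (not less), so that no money is unaccounted for and the second equilibrium condition holds verbatim. This requires carefully combining condition (8) ($\lambda = m_i/(w_i(x_i)+s_i)$) with condition (6) and the definition $w_i(x_i) = \sum_j u_{ij} x_{ij}$, and checking the degenerate subcases where $w_i(x_i) = 0$ or $s_i = 0$ separately. Everything else is a direct translation of Lemma \ref{lem.arctic} and the KKT stationarity conditions into the language of Definition \ref{def.arctic}.
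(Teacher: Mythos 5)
Your proposal is correct and follows essentially the same route as the paper: establish positivity of prices, invoke KKT condition (2) for goods being fully sold, use the three structural cases from Lemma \ref{lem.arctic} to show each buyer's bundle is utility-maximizing, and verify the budget identity $\sum_j x_{ij}p_j + s_i = m_i$ by combining conditions (6), (8), and (4). The only cosmetic difference is that you spell out the contradiction argument for $p_j > 0$ (via condition (5) and the finiteness of $w_i(x_i)+s_i$), which the paper states more tersely, and you handle the budget bookkeeping via the common ratio $m_i/(w_i(x_i)+s_i)$ rather than the paper's explicit three-case split; both yield the same calculation.
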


\begin{proof}
By Definition \ref{def.arctic}, two facts need to be established.  

{\bf 1). The market clears:} 
By the (mild) assumptions in Definition \ref{def.arctic}, each good $j$ has an interested buyer, hence $p_j > 0$. Now, by KKT condition (2), it is fully sold. Next we prove that each buyer $i$ spends all her money. There are three cases:

{\bf Case 1:} $s_i = 0$. The money spent by $i$ is
\[ \sum_{j \in G} {x_{ij} p_j} \ = \ 
\sum_{j \in G} {x_{ij}   {{m_i u_{ij}} \over {w_i(x_i)}} } \ = \ m_i . \]

{\bf Case 2:}
$s_i > 0$ and $\exists j \in G$ s.t. $x_{ij} > 0$. By Statement (2), $\alpha_i = 1$ and $m_i = w_i(x_i) + s_i$. Furthermore, since each good is bought by $i$ at the upper bound of its price, 
\[ w_i(x_i) = \sum_{j \in G} {u_{ij} x_{ij}} \ = \ \sum_{j \in G} {p_{j} x_{ij}} ,\]
i.e., the money spent on goods. Hence, $s_i$ is precisely the unspent money. 

{\bf Case 3:}
$\forall j \in G, \ x_{ij} = 0$. As shown in Statement (3), $s_i = m_i$. 

\bigskip  

 {\bf 2). Buyers get optimal bundles:}  
 By Statement (1) of Lemma \ref{lem.arctic}, each good bought by $i$ is an mbpb good for $i$. Furthermore, if money is returned, in Case 2 (Case 3), it provides exactly (at least) as much utility as mbpb goods.  
\end{proof}

\bigskip 

\begin{definition}
	\label{def.RCP}
(Vazirani \cite{Va.rational})
A nonlinear convex program is said to be a {\em rational convex program (RCP)} if for any setting of its parameters to rational numbers such that it has a finite optimal solution, it admits an optimal solution that is rational and can be written using polynomially many bits in the number of bits needed to write all the parameters. (As stated in Note \ref{note: rational}, for all models studied in this paper, all parameters will be assumed to be non-negative rational numbers.) 
\end{definition}

\begin{theorem}
	\label{thm.RCP}
Convex program (\ref{eq.arctic}) is a rational convex program. 
\end{theorem}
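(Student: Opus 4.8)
The plan is the usual one for rational convex programs: guess, from a fixed optimal solution, its combinatorial skeleton, and then show that this skeleton pins down a rational optimal solution of polynomial bit size through a rational linear system. Fix any optimal primal--dual pair $(x^*,s^*,s^*,p^*,\lambda^*)$. By Theorem~\ref{thm.arctic} (together with the mild assumptions) we may take $p^*>0$ and all goods fully sold; moreover $\lambda=1$ is consistent with KKT conditions (3),(4),(7),(8) whenever the primal is optimal (if $\lambda^*<1$ then KKT~(4) forces $s^*=0$, and resetting $\lambda^*:=1$ preserves every KKT condition since $\alpha^*_i\ge 1$ for buyers that buy goods), so we assume $\lambda^*=1$. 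Goods that no buyer desires, and buyers with zero utility for every good, are degenerate and are stripped off at the outset (such goods get $p_j=0$ and $x_{ij}=0$; such buyers are returned all their money). From the surviving instance read off the support $E=\{(i,j):x^*_{ij}>0\}$ and the partition of $B$, following Lemma~\ref{lem.facts}, into $N$ (no good allocated, so $s^*_i=m_i$), $A^{>}$ (mbpb $\alpha^*_i>1$, so $s^*_i=0$), and $A^{=}$ (mbpb $\alpha^*_i=1$), where $\alpha^*_i=\max_{j}u_{ij}/p^*_j$.

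\emph{Step 1 (price discovery).} I claim $p^*$ is the unique solution of an explicit linear system with coefficients among the $u_{ij}$ and $m_i$, hence rational of polynomial bit size. Work on the bipartite support graph $H$ on $B\cup G$ with edge set $E$. For $(i,j),(i,j')\in E$ we have $u_{ij}p^*_{j'}=u_{ij'}p^*_j$ (both sides equal $\alpha^*_ip^*_jp^*_{j'}$), which fixes all prices within a connected component $C_\ell=(B_\ell,G_\ell)$ of $H$ up to a single positive scalar. This scalar is then pinned down in one of two ways: if $C_\ell$ contains a buyer $i\in A^{=}$, then $p^*_j=u_{ij}$ for every $(i,j)\in E$ (since $\alpha^*_i=1$); otherwise $B_\ell\subseteq A^{>}$, every buyer of $B_\ell$ spends all its money inside $G_\ell$, and every good of $G_\ell$ is fully sold to buyers of $B_\ell$, so $\sum_{j\in G_\ell}p^*_j=\sum_{i\in B_\ell}m_i$. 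Picking a spanning tree of $C_\ell$ and checking that the associated homogeneous system forces the free scalar to vanish shows the system has $p^*$ as its unique solution; consequently $p^*$, and each $\alpha^*_i=u_{ij}/p^*_j$, is rational of polynomial bit size.

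\emph{Step 2 (allocations and refunds).} Now fix the rational data $p:=p^*$, $\lambda:=1$. Consider the set $P$ of triples $(x,s,s)$ satisfying the primal constraints of program~(\ref{eq.arctic}) together with: $x_{ij}=0$ for $(i,j)\notin E$; $\sum_i x_{ij}=1$ for all $j$; $s_i=m_i$ for $i\in N$; $s_i=0$ and $\sum_j p^*_j x_{ij}=m_i$ for $i\in A^{>}$; $s_i\ge 0$ and $w_i(x_i)+s_i=m_i$ for $i\in A^{=}$; and $s=\sum_i s_i$. This $P$ is a rational polyhedron of polynomial bit-complexity, and it is nonempty since $(x^*,s^*,s^*)\in P$; hence it contains a point of polynomial bit size. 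For any such point, $(x,s,s,p^*,1)$ satisfies all eight KKT conditions: (1)--(4) are immediate; (5) holds because $\max_j u_{ij}/p^*_j=\alpha^*_i$ depends only on the unchanged data and because for $i\in A^{>}$ the spending equation gives $w_i(x_i)=\alpha^*_i m_i$ while for $i\in A^{=}$ the refund equation gives $(w_i(x_i)+s_i)/m_i=1\ge u_{ij}/p^*_j$; (6) holds because $x_{ij}>0$ forces $(i,j)\in E$, where $u_{ij}/p^*_j=\alpha^*_i$ coincides with the common value $(w_i(x_i)+s_i)/m_i$; and (7)--(8) follow from $w_i(x_i)+s_i\in\{\alpha^*_i m_i,\,m_i\}$. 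Since program~(\ref{eq.arctic}) is convex, the KKT conditions are sufficient for optimality, so this rational point of polynomial bit size is an optimal solution, proving the theorem.

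I expect the crux to be Step 1: identifying the correct case split---a ``refunding'' buyer with $\alpha_i=1$ pins the local price scale directly via $p_j=u_{ij}$, whereas a component containing no such buyer is pinned exactly as in the linear Fisher market, by money conservation---and verifying that the resulting linear system really has $p^*$ as its \emph{unique} solution. A secondary point requiring care is the KKT check in Step 2: one must impose the per-buyer spending equation $\sum_j p^*_j x_{ij}=m_i$ for $i\in A^{>}$ (not merely $s_i=0$), since otherwise condition~(6) can fail for the reconstructed allocation even though all market-clearing equalities still hold.
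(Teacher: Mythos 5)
Your proof is correct, but it takes a genuinely different route from the paper's. The paper linearizes KKT condition~(6) by substituting $q_j = 1/p_j$, guesses which $x_{ij}$ and $s_i$ are strictly positive (say $k$ and $l$ of them), and writes a single square linear system of $m+k+l$ equations---condition~(2) for each good, condition~(6) for each positive $x_{ij}$, and condition~(8) with $\lambda=1$ for each positive $s_i$---in the $m+k+l$ variables $q_j,x_{ij},s_i$; since the equilibrium solves this rational system, the system is consistent and yields a rational solution of polynomial bit size. You instead decouple price discovery from allocation recovery: in Step~1 you pin down $p^*$ directly (rather than its reciprocal) by exploiting the connected-component structure of the support graph, fixing the free scalar in each component either via a refunding buyer's relations $p_j=u_{ij}$ or via money conservation over the component; in Step~2, with $p^*$ frozen, you exhibit a nonempty rational polytope of candidate $(x,s,s)$ and explicitly re-verify all eight KKT conditions for an arbitrary point of it. The paper's proof is more compact, but it leaves implicit that the guessed square system is determined (or, if underdetermined, that a generic rational solution still satisfies the KKT inequalities); your Step~2 makes exactly this verification explicit, and your observation that the per-buyer spending equality $\sum_j p^*_j x_{ij}=m_i$ must be imposed for $i\in A^{>}$ (rather than just $s_i=0$) is precisely the point where a naive reconstruction would fail condition~(6). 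Your preprocessing reduction $\lambda^*:=1$ and the component-wise price normalization are also sound.
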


\begin{proof}
Let us assume that all parameters, i.e., the $u_{ij}$s and $m_i$s, are rational numbers. We will show that equilibrium allocations and prices form a solution to a linear system hence proving the theorem.

We first give the variables of this linear system. For each good $j$, define a new variable, $q_j$, which is meant to be $1/p_j$. The linear system will solve for the $m$ $q_j$s. The $p_j$s will be obtained by taking  reciprocals of $q_j$s. Among the $x_{ij}$s, guess the ones that are strictly positive. Assume there are $k$ such variables and discard the rest. Lastly, guess the variables $s_i$ that are strictly positive. Assume there are $l$ of these and discard the rest. 

Finally, we give the $m + k + l$ equations of the linear system over these $m + k + l$ variables. For each good $j$,  include the equality given in condition (2). For each positive $x_{ij}$, include the equality given in condition (6). For each positive $s_{i}$, include the equality given in condition (8) with $\lambda$ replaced by 1. Finally, if all $s_i$s are zero and $s = 0$, $\lambda$ plays a role only in condition (7) and this is easily satisfied by taking $\lambda$ to be the largest allowed value, namely $\lambda = 1$. 
\end{proof}

\section{An Efficient Primal-Dual Algorithm for Arctic Auction} 
\label{sec.alg}

The fact that program (\ref{eq.arctic}) is a rational convex program, Theorem \ref{thm.RCP}, opens up the possibility of obtaining a polynomial time combinatorial algorithm for arctic auction.

\subsection{Balanced Flow in Network $\cN(p, r)$}
\label{sec.N}

Algorithm \ref{alg.one} builds on the polynomial time algorithm for a linear Fisher model given in \cite{DPSV}. In particular, it also critically uses the notion of {\em balanced flow} from \cite{DPSV}. Section 8 of the latter paper shows how to compute a balanced flow via $n$ max-flow computations. Remark \ref{rem.l2} explains the reason for using balanced flows and $\ell_2$ norm of the surplus vector. 

\begin{definition} 
\label{def.N}
Define $\cN(p, r)$ to be a directed network on vertices $B \cup G \cup \{s, t\}$, where $s$ and $t$ are the source and sink of $\cN(p, r)$. The network has two vectors as parameters, $p$ and $r$, where $p_j$ is the current price of good $j$ and $r_i$ is the {\bf money returned} to buyer $i$. The {\bf left-over money} of $i$ is defined to be $(m_i - r_i)$. The vectors $p$ and $r$ evolve as the algorithm proceeds, hence changing the network itself, as detailed in Algorithm \ref{alg.one}. The edges of $\cN(p, r)$ are:
\begin{enumerate}
	\item $\forall j \in G$: edge $(s, j)$ with capacity $p_j$. 
	\item $\forall i \in B$: edge $(i, t)$ with capacity $m_i - r_i$, i.e., the left-over money of $i$. 
	\item $\forall j \in G$ and $\forall i \in B$: if $j$ is an mbpb good for $i$, then the edge $(j, i)$ with capacity infinity. 
\end{enumerate}

If $f$ is a feasible flow in $\cN(p, r)$, its {\em value} is the amount of flow going from $s$ to $t$ and is denoted by $|f|$. 
\end{definition}

\begin{definition}
	\label{def.balanced}
W.r.t. flow $f$ in network $\cN(p, r)$, for each buyer $i$ define $\surplus(i)$ to be $(m_i - r_i) - f(i, t)$, i.e., the left-over capacity of edge $(i, t)$. Let $\gamma(f)$ denote the vector of surpluses of all buyers. Then $f$ is said to be a {\em balanced flow} if it minimizes the $\ell_2$ norm of the surplus vector, i.e., $\norm{f}_2$, over all feasible flows $f$ in $\cN(p, r)$.  
\end{definition}

Clearly a balanced flow will be a max-flow in $\cN(p, r)$. Moreover, all balanced flows have the same surplus vector (Corollary 8.7 in \cite{DPSV}) and a balanced flow will make the surpluses of the buyers as equal as possible. Let $f$ be a balanced flow in $\cN(p, r)$ and let $\cR(f)$ be the corresponding residual network. The following property plays a key role in the DPSV algorithm as well as in Algorithm \ref{alg.one}.

{\bf Property 1:} Let $i, i' \in B$. If $\surplus(i') < \surplus(i)$ then there is no path from $i'$ to $i$ in $\cR(f)$. 

Theorem 8.4 in \cite{DPSV} shows that $f$ is a balanced flow iff Property 1 holds. The proof of the main direction is straightforward: if there were a path from $i'$ to $i$ in $\cR(p)$, we could find a circulation which increases $\surplus(i')$ and decreases $\surplus(i)$ thereby making the flow more balanced, leading to a contradiction.

\bigskip


\begin{figure}

	\begin{wbox}
		\begin{alg}
		\label{alg.one}
		{\bf Algorithm for Arctic Auction}\\
		\\
			 {\bf Initialization} 
			 \begin{enumerate}
			 	\item $\forall j \in G$: $p_j \leftarrow \mn/m$. 
			 	\item $\forall i \in B$: $\alpha_i \leftarrow \max_{j \in G} {\left\{ {u_{ij} \over {p_j}} \right\}}$ \ \ \ \ and \ \ \ \ $r_i \leftarrow 0$.   
			 	\item Compute $\cN(p, r)$.
			 	\item $\forall j \in G \ s.t. \ \deg_{\cN(p, r)} (j) = 0, \ \ \ \ p_j \leftarrow \max_{i \in B} {\left\{ {u_{ij} \over {\alpha_i}} \right\}}$.
			 	\item Recompute $\cN(p, r)$.
			 \end{enumerate}
		
		\medskip
		
			 {\bf New phase} 
			\begin{enumerate}
			\item $f \leftarrow$ balanced flow in $\cN(p, r)$. 
			\item $\forall i \in B$: $\surplus(i) \leftarrow (m_i - r_i) - f(i, t)$.
			\item $\delta \leftarrow \max_{i \in B} \{\surplus(i)\}$ . 
			\item $I \leftarrow \arg \max_{i \in B} {\{\surplus(i)\}}$ . 
			\item $J \leftarrow \Gamma(I)$.
			\item Remove edges going from $B - I$ to $J$.
			\item Let $Z \subseteq (B-I)$ be buyers having no edges from $G$ in $\cN(p, r)$. 
			\item Record initial prices of goods in $J$: $\forall j \in J$: $\overline{p}_j \leftarrow p_j$. 
			
		\medskip
		
			\item {\bf New iteration}
			\begin{enumerate}
			\item $\theta \leftarrow 1$.
			\item $\forall j \in J$: $p_j \leftarrow \overline{p}_j \cdot \theta$. \ \ Define capacity of edge $(s, j)$ to be $p_j$. 
			\item Raise $\theta$ continuously {\bf until}:
			\begin{enumerate}

			\item {\bf New edge:} $\exists i \in I$ and $j \in (G - J)$ s.t. $j$ is a mbpb good for $i$.
			\begin{enumerate}
					\item Add $(i, j)$ with capacity infinity to $\cN(p, r)$. 
					\item Recompute balanced flow, $f$, in $\cN(p, r)$.
					\item $I' \leftarrow \{i \in (B-I) ~|~ \exists \mbox{path from $i$ to $I$ in $\CR(f)$} \}$ .
					\item $I \leftarrow (I \cup I')$.
					\item $J \leftarrow \Gamma(I)$. 
					\item Start new iteration.
			\end{enumerate}
			\item {\bf A set goes tight:} $\exists S \subseteq J: \worth(S) = \worth(\Gamma(S))$. 
			\begin{enumerate}
			\item If $S \neq G$, start new phase.
			\item If $S = G$, Output current allocations and prices and {\bf Halt}.
			\end{enumerate}
			\item {\bf Money is returned to a buyer:} $\exists i \in I$ s.t. $\alpha_i = 1$. 
			\begin{enumerate}
			\item Set $i$'s left-over money to 0. If $(s, G \cup B \cup t)$ is an $s$--$t$ min-cut, return all of $i$'s money, remove $i$ from $B$ and $\cN(p, r)$, and go to next phase. 
			\item If $(s, G \cup B \cup t)$ is not an $s$--$t$ min-cut, \\ find a maximal $s$--$t$ min-cut, say $(s \cup S \cup T, (G - S) \cup (B - T) \cup t)$. 
			\item Return $m_i - (\worth(S) - \worth(T))$  money to $i$. 
			\item Go to next phase. 
			\end{enumerate}
			
			\item {\bf Remove a buyer from $Z$:} If $\exists i \in Z$ s.t. $\alpha_i = 1$, return all of $i$'s money, \\  remove $i$ from $B$ and $\cN(p, r)$, and continue raising $\theta$.
			\item {\bf New edge in $Z$:} If $\exists i \in Z$ and $j \in (G - J)$ s.t. $j$ is mbpb good for $i$,  add $(j, i)$ to $\cN(p, r)$, move $i$ from $Z$ to $(B - I) - Z$, and continue raising $\theta$.

			\end{enumerate}

			\end{enumerate}
			\end{enumerate} 

		\bigskip
		\end{alg}
	\end{wbox}
\end{figure} 

\subsection{Description of Algorithm \ref{alg.one}}
\label{sec.desc}

 Similar to \cite{DPSV}, Algorithm \ref{alg.one} is also a {\em primal-dual algorithm}. In our algorithm, the dual steps work on the dual variables, i.e., $p_j$s, by raising the price of a well-chosen set of goods. A max-flow in $\cN(p, r)$ determines the primal variables, i.e., $x_{ij}$s. The task of the primal steps is: 
 
 \begin{enumerate}
 	\item  Modify $\cN(p, r)$ appropriately so the subnetwork induced by $B \cup G$ contains all and only mbpb edges.
 	\item  Determine money returned, i.e., $r$. 
 \end{enumerate}

 Similar to \cite{DPSV}, Algorithm \ref{alg.one} maintains:

{\bf Invariant:} Algorithm \ref{alg.one} maintains the invariant that $(\{s\}, B \cup G \cup \{t\})$ is a minimum $s$--$t$ cut in $\cN(p, r)$. 

Because of this Invariant, the prices computed by the algorithm at any point are (weakly) bounded by equilibrium prices, i.e., the algorithm resorts to raising prices only, and not raising {\em and} lowering prices.

Let $\mn$ denote the minimum money of a buyer, i.e., $\mn = \min_{i \in B} {m_i}$. Step 1 of Algorithm \ref{alg.one} initializes the price of each good to $\mn/m$, ensuring that each buyer can buy all $m$ goods with her money. This ensures that $\alpha_i > 1$ and after Step 5, when $\cN(p, r)$ contains all mbpb edges, it satisfies the Invariant. Step 2 initializes the mbpb, $\alpha_i$, of each buyer $i$ and sets $r_i$ to 0.  At the beginning of each phase, a balanced flow is computed in $\cN(p, r)$ and the set $I \subseteq B$ of buyers having the largest surplus is identified (Step 4). $J \subseteq G$ consists of the set of goods desired by buyers in $I$. 

By Property 1, there is no residual path from a buyer in $(B - I)$ to a buyer in $I$. Therefore there is no flow going from goods in $J$ to buyers in $(B - I)$. In Step 6 of the New Phase, such edges are dropped. Let $Z \subseteq B - I$ be buyers all of whose mbpb edges are in $I$; clearly, after Step 6, such buyers will have no edges to $G$. We will call them {\bf zero-degree buyers} and will put them in the special set $Z$. For $i \in Z$, edge $(i, t)$ carries no flow and $\surplus(i) = m_i$. Importantly, the mbpb, $\alpha_i$ of such a buyer $i$ decreases as prices of goods in $J$ increase. In contrast, all mbpb goods of buyers in $(B - I) -Z$ are in $G - J$ and since the price of these goods remains unchanged, their mbpb remains unchanged. The algorithm keeps track of buyers in $Z$ for the following two events:
\begin{enumerate}
	\item If $\exists i \in Z$ s.t. $\alpha_i = 1$, Step 9(c)(iv) is triggered, which removes $i$ is from $B$ and $\cN(p, r)$. Since edge $(i, t)$ carried no flow, the removal of $i$ does not violate the Invariant. Following this, the algorithm continues raising the prices of goods in $J$. Observe that until $i$'s money is returned, we cannot raise prices of goods in $J$. 
 
	\item If $\exists i \in Z$ and $j \in (G - J)$ s.t. $j$ becomes a mbpb good for $i$, Step 9(c)(v) is triggered, which adds $(i, j)$ to $\cN(p, r)$ with infinity capacity. $i$ is not a zero-degree buyer anymore and is moved from $Z$ to $(B - I) - Z$, and we continue raising $\theta$.
\end{enumerate} 

The main changes from \cite{DPSV} are {\em primal steps} Step 9(c)(iv), Step 9(c)(v) and Step 9(c)(iii). The last step is triggered when $\alpha_i$ of a buyer $i \in I$ becomes 1. In this case, either $i$ is removed after returning all her money or is retained after returning a part of her money.  

{\bf Notation} $\Gamma(S)$ will denote the {\em neighborhood} of set $S$ in $\cN(p, r)$ as follows: If $S \subseteq G$ then $\Gamma(S) \subseteq B$ and if $S \subseteq B$ then $\Gamma(S) \subseteq G$.  

We will define the {\em worth of a set} as follows: If $S \subseteq G$ then $\worth(S)$ is defined to be the sum of prices of goods in $S$ and if $S \subseteq B$ then $\worth(S)$ is defined to be the sum of leftover money of buyers in $S$. By the Invariant, $\forall S \subseteq G: \ \worth(S) \leq \ \worth(\Gamma(S))$. Note that while computing $\worth(T)$ in Step 9(c)(iii)(C), we will assume that the money of $i$ is zero. 

The algorithm runs in {\em phases}. A phase is partitioned into {\em iterations}. During each iteration, the algorithm increases the prices of goods in set $J$. For $j \in J$, $\overline{p}_j$ will denote the price of $j$ at the beginning of the iteration. Its current price, $p_j$, is $\overline{p}_j \cdot \theta$, where $\theta$ is initialized to 1 and is raised continuously at rate 1. As mentioned above, at the beginning of an iteration, edges from $J$ to $B - I$ will be dropped; by definition, there are no edges from $(G-J)$ to $I$. Hence the subnetwork induced on $(I, J)$ is decoupled from $(G-I, B-J)$. An iteration ends when a new edge enters $\cN(p, r)$, in Step 9(c)(i), or when the phase ends.

A phase ends when one of these events happens: 

\begin{enumerate}
	\item  A set $S \subseteq G$ goes {\em tight}, i.e., $\worth(S) = \worth(\Gamma(S))$, in Step 9(c)(ii).
	\item  $\alpha_i = 1$ for some buyer $i$ and her money is fully returned, in Step 9(c)(iii)(A).
	\item  $\alpha_i = 1$ for some buyer $i$ and her money is partially returned, in Step 9(c)(iii)(D).
\end{enumerate}

In the third possibility, in Step 9(c)(iii)(B), $(s, G \cup B \cup t)$ is not an $s$--$t$ min-cut in $\cN(p, r)$, indicating that returning all money to $i$ will violate the Invariant. Hence if $(s \cup S \cup T, (G - S) \cup (B - T) \cup t)$ is a maximal $s$--$t$ min-cut, then $\worth(S) > \worth(T)$. As mentioned above, while computing $\worth(T)$ in Step 9(c)(iii)(C), we will assume that the money of $i$ is zero. 

In Step 9(c)(iii)(C), the algorithm will return $r_i = m_i -  (\worth(S) - \worth(T))$ money to $i$. As a result, $(s, G \cup B \cup t)$ becomes an $s$--$t$ min-cut again $\cN(p, r)$, and the Invariant is restored, as shown in Lemma \ref{lem.Inv-partial}. Moreover, $S$ is now a tight set and therefore the phase ends. 

\begin{lemma}
	\label{lem.Inv-partial}
In Step 9(c)(iii)(C), after returning $m_i - (\worth(S) - \worth(T))$ money to $i$, $(s, G \cup B \cup t)$ becomes an $s$--$t$ min-cut again, thereby restoring the Invariant. At this point, $(s \cup S \cup T, (G - S) \cup (B - T) \cup t)$ is also an $s$--$t$ min-cut in $\cN(p, r)$, i.e., $S$ is a tight set. 
\end{lemma}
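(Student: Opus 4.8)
The plan is to track the network $\cN(p,r)$ through the three states it occupies during Step 9(c)(iii). Write $\cN_0$ for the network just before Step 9(c)(iii)(A) (so the Invariant holds for $\cN_0$); $\cN_1$ for $\cN_0$ after Step 9(c)(iii)(A) has set $i$'s left-over money to $0$ (i.e.\ the capacity of $(i,t)$ is $0$); and $\cN_2$ for $\cN_1$ after Step 9(c)(iii)(C) has returned $m_i-(\worth(S)-\worth(T))$ to $i$, so that in $\cN_2$ the capacity of $(i,t)$ equals $\worth(S)-\worth(T)$. These three networks are identical except for the capacity of the single edge $(i,t)$; in particular they share the same neighbourhood function $\Gamma(\cdot)$, and the trivial cut $(\{s\},B\cup G\cup\{t\})$ has value $P:=\worth(G)=\sum_{j\in G}p_j$ in all of them. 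By the Invariant, $P$ is the min-cut value of $\cN_0$; and since Step 9(c)(iii)(B) is reached, $P$ is not the min-cut value of $\cN_1$, so the min-cut value $\rho$ of $\cN_1$ satisfies $\rho<P$.

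First I would analyze the source-side-maximal minimum cut of $\cN_1$ (well defined, since minimum cuts are closed under union of their source sides). Write its source side as $\{s\}\cup S\cup T$ with $S\subseteq G$, $T\subseteq B$; finiteness forces $\Gamma(S)\subseteq T$, so the cut value is $(P-\worth(S))+\worth_{\cN_1}(T)=\rho$, where $\worth_{\cN_1}(T)$ is the total left-over capacity of buyers in $T$ in $\cN_1$. Because $i$'s left-over capacity is $0$ in $\cN_1$ and $i$ emits no edge other than $(i,t)$ (and a finite cut has no edge $(j,i)$ with $j\in S$), if $i$ lay on the sink side we could move it to the source side without changing the cut value or making it infinite, contradicting maximality; hence $i\in T$. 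Then $\worth_{\cN_1}(T)=\worth(T)$ under the convention that $i$'s money counts as $0$, so $\rho=P-\worth(S)+\worth(T)$ and the new left-over capacity $d:=\worth(S)-\worth(T)$ assigned to $i$ in Step 9(c)(iii)(C) equals $P-\rho>0$. (Feeding this same vertex partition into the Invariant of $\cN_0$ gives $d\le m_i-r_i$ for the pre-step value of $r_i$, so $m_i-d\ge r_i$ and the step only ever returns more money.)

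Next I would show the Invariant is restored in $\cN_2$, i.e.\ that every finite cut $(\{s\}\cup S'\cup T',\,(G-S')\cup(B-T')\cup\{t\})$ of $\cN_2$ has value $\ge P$. If $i\notin T'$, the edge $(i,t)$ lies inside the sink side, so the cut has the same value in $\cN_0$ and in $\cN_2$, which is $\ge P$ by the Invariant of $\cN_0$. If $i\in T'$, the edge $(i,t)$ crosses the cut and its capacity rose from $0$ (in $\cN_1$) to $d$ (in $\cN_2$), so the cut's $\cN_2$-value is its $\cN_1$-value plus $d$, hence $\ge\rho+d=P$. Since the trivial cut has value exactly $P$, it is a minimum cut of $\cN_2$ — the first assertion of the lemma. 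For the second assertion: the cut with source side $\{s\}\cup S\cup T$ has (since $i\in T$) $\cN_2$-value equal to its $\cN_1$-value plus $d$, namely $\rho+d=P$, so it too is a minimum cut of $\cN_2$. Equating this value to $P$ gives $\worth_{\cN_2}(T)=\worth(S)$; together with $\worth(\Gamma(S))\le\worth_{\cN_2}(T)$ (from $\Gamma(S)\subseteq T$) and the just-restored Invariant $\worth(S)\le\worth(\Gamma(S))$, this yields $\worth(S)=\worth(\Gamma(S))$, i.e.\ $S$ is tight.

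I expect the crux to be the $i\in T'$ case of the Invariant argument: knowing merely $\rho<P$ does not suffice — one needs the exact identity $\rho=P-d$ to see that raising the capacity of $(i,t)$ by precisely $d$ lifts every cut through $i$ back to value $\ge P$. This is the structural reason the algorithm returns $m_i-(\worth(S)-\worth(T))$ and not some other amount, so getting this bookkeeping right is the heart of the argument. A secondary point that needs care is invoking the lattice structure of minimum cuts, both to speak of the maximal minimum cut and to place $i$ on its source side.
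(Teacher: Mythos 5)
Your proof is correct and follows the same structural approach as the paper (tracking the network before returning money, after setting $i$'s left-over money to zero, and after setting it to $\worth(S)-\worth(T)$, and comparing cut values). The paper phrases the key step more loosely as a ``continuously increase $i$'s money until the trivial cut is again a min-cut'' argument, whereas you compute the exact amount $d=P-\rho$ directly via the lattice of min-cuts and verify the Invariant by a case split on whether $i$ lies on the sink side; this fills in the bookkeeping the paper leaves implicit, but it is the same underlying argument.
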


\begin{proof}
	Before Step 9(c)(iii)(A) is executed, the Invariant holds and therefore $(s, G \cup B \cup t)$ is an $s$--$t$ min-cut in $\cN(p, r)$. After Step 9(c)(iii)(A) is executed and all of $i$'s money is returned, $(s, G \cup B \cup t)$ is not an $s$--$t$ min-cut but $(s \cup S \cup T, (G - S) \cup (B - T) \cup t)$ is. At this point, let us continuously increase the money of buyer $i$ from zero until $(s, G \cup B \cup t)$ also becomes an $s$--$t$ min-cut. Let $\beta$ be the money of $i$ when this happens. Then clearly, $\worth(S) = \worth(T) + \beta$. Therefore on setting $r_i = m_i - (\worth(S) - \worth(T))$ the Invariant is restored and $S$ is a newly tight set. 
\end{proof}

Observe that at this point, $\alpha_i$ is still 1. Now there are two possibilities: First, $i$ enters $I$ in a future phase. At the start of that phase, Step 9(c)(iii) will kick in again since $\alpha_i = 1$. Again $i$ will either be returned all or part of her money. In the former case, the prices of $i$'s mbpb goods can be raised. After increasing prices, $\alpha_i < 1$, but this is fine since at these high prices, $i$ prefers her money back, and has already been arranged. In the latter case, $i$ is in a tight set again, and the whole process will repeat. 

Second, the algorithm terminates without $i$ entering the active set $I$ again. One possibility is that $i$ is removed in Step 9(c)(iv) with all her money returned. Otherwise at termination, in the final max-flow in Step 9(c)(ii)(B), $i$ will be allocated her mbpb goods worth her remaining money. Since $\alpha_i$ is still 1, $i$ is happy with a combination of money back and her mbpb goods. Note that the algorithm terminates when $(\{s\} \cup B \cup G, \{t\})$ becomes a min-cut.

If $\alpha_i > 1$ throughout the run of the algorithm, Step 9(c)(iii) will not kick in and no money will returned to $i$. At termination, in Step 9(c)(ii)(B), $i$ will be allocated her mbpb goods worth her total money. Algorithm \ref{alg.one} raises prices monotonically and therefore for each buyer $i$, $\alpha_i$ decreases  monotonically. Combining with all the observations made above, including those for Steps 9(c)(iv) and 9(c)(v), we get Theorem \ref{thm.return}, proving correctness of Algorithm \ref{alg.one} on the main new feature of the Arctic Auction over and above a linear Fisher market. 

\begin{theorem}
	\label{thm.return}
For each buyer $i$, Algorithm \ref{alg.one} allocates a bundle of goods and money satisfying:
\begin{enumerate}
	\item If $\alpha_i > 1$, $i$ will be allocated her mbpb goods worth her entire money.
	\item If $\alpha_i = 1$, $i$ may be returned money and any goods allocated will be her mbpb goods. The total  worth of the bundle will be $m_i$.
	\item If $\alpha_i < 1$, $i$ will be returned all her money. 
\end{enumerate}
\end{theorem}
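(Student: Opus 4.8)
The plan is to run a case analysis on the value of $\alpha_i$ at termination, resting on two structural features of Algorithm~\ref{alg.one}. First, the Invariant keeps $(\{s\}, B \cup G \cup \{t\})$ a minimum $s$--$t$ cut throughout, so the algorithm only ever raises prices and hence $\alpha_i = \max_{j} u_{ij}/p_j$ is non-increasing over the whole run, for every buyer $i$; moreover, during the main loop prices move continuously, so $\alpha_i$ is continuous there. Second, the only steps that return money are Steps 9(c)(iii) and 9(c)(iv), and both fire only when $\alpha_i = 1$. I would also first record (using the termination analysis of \cite{DPSV} adapted to the new money-return events, with Lemma~\ref{lem.Inv-partial} ensuring the Invariant survives partial returns) that the algorithm halts, and that at termination $(\{s\} \cup B \cup G, \{t\})$ is a minimum cut; since the Invariant still holds, $(\{s\}, B \cup G \cup \{t\})$ is minimum as well, so the final flow $f$ saturates every edge $(s,j)$ and every edge $(i,t)$. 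Reading off $x_{ij} = f(j,i)/p_j$, this says every good is fully sold and every buyer spends her entire left-over money $m_i - r_i$ on goods $j$ for which $\cN(p,r)$ contains the edge $(j,i)$, that is, on mbpb goods of $i$. Consequently the total worth of each buyer's bundle, goods plus returned money $r_i$, is exactly $m_i$. In particular, if $\alpha_i > 1$ at termination then the guard $\alpha_i = 1$ never held for $i$, so $r_i = 0$ and $i$ spends all of $m_i$ on mbpb goods, which is part~1.

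For part~3, suppose $\alpha_i < 1$ at termination. After initialization $\alpha_i > 1$ (Step~1 sets $p_j = \mn/m$, so every buyer can afford all goods), so by continuity $\alpha_i$ passes through the value $1$ during the main loop. At that instant one of Steps 9(c)(iii), 9(c)(iv) must fire for $i$: within a phase the only prices that move are those of $J = \Gamma(I)$, and raising them only lowers $u_{ij}/p_j$ for $j \in J$, so the mbpb of a buyer in $(B-I)-Z$ (all of whose mbpb goods lie in the frozen set $G-J$) cannot change, leaving only buyers in $I$ (watched by Step 9(c)(iii)) and the zero-degree buyers in $Z$ (watched by Step 9(c)(iv)) as the ones whose $\alpha$ can reach $1$. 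If the firing step returns all of $i$'s money then $r_i = m_i$ and we are in part~3. If it returns only part of it, Lemma~\ref{lem.Inv-partial} places $i$ in a newly tight set with zero surplus; the goods of a tight set have frozen prices, and whenever $i$ re-enters the active set in a later phase Step 9(c)(iii) fires before any price is raised, again returning all or part of her money. Hence $\alpha_i$ cannot drop below $1$ while $i$ still holds money, so $\alpha_i < 1$ forces $r_i = m_i$ and $i$ is returned all her money --- part~3.

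For part~2, suppose $\alpha_i = 1$ at termination. By the argument above, the first time $\alpha_i$ reached $1$ one of Steps 9(c)(iii), 9(c)(iv) fired; the value $r_i$ may be reset a bounded number of times through further invocations of Step 9(c)(iii), the bound coming from the (finite) number of phases. At termination $i$ holds mbpb goods worth $m_i - r_i$ together with $r_i$ in returned money, a bundle of worth $m_i$; and because $\alpha_i = 1$ forces $p_j = u_{ij}$ on each mbpb good $j$ of $i$, she is indifferent between returned money and any of her mbpb goods, so this bundle is utility maximizing. That is part~2, and since the three cases $\alpha_i > 1$, $\alpha_i = 1$, $\alpha_i < 1$ are exhaustive, the proof is complete.

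\textbf{Main obstacle.} The delicate point, and the only one requiring real work beyond \cite{DPSV}, is the interplay between partial returns and the tight-set ``freezing'' used above in part~3: one must show that once $i$'s money has been only partially returned and $i$ sits in a tight set, no later iteration raises the price of any mbpb good of $i$ until $i$ is either removed (with full reimbursement) or revisited by Step 9(c)(iii), and that Step~6's decoupling of $(I,J)$ from the rest genuinely prevents some unwatched buyer's mbpb from sliding past $1$ unnoticed. Pinning this down --- together with checking that the termination argument of \cite{DPSV} still goes through once Steps 9(c)(iii), 9(c)(iv), 9(c)(v) are added --- is what makes the case split clean and closes the argument.
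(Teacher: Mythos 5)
Your proposal is correct and follows essentially the same line of reasoning as the paper, which proves Theorem~\ref{thm.return} not as a standalone argument but as the conclusion of the running discussion in Section~\ref{sec.desc}: prices rise monotonically so each $\alpha_i$ is non-increasing, money is returned only by Steps 9(c)(iii)/(iv) (both guarded by $\alpha_i = 1$), buyers in $(B-I)-Z$ have unchanged mbpb so nothing slides past $1$ unwatched, a partial return parks $i$ in a newly tight set via Lemma~\ref{lem.Inv-partial}, and at termination the final max-flow allocates each remaining buyer mbpb goods worth her leftover money. Your framing via the saturated min-cut at termination (so that each buyer's bundle is worth exactly $m_i$) and your explicit ``main obstacle'' paragraph are a slightly more formal rendering of the same argument; the delicate point you flag about tight-set freezing is exactly the part the paper also treats informally.
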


\section{Proof of Running Time}
\label{sec.time-proof}

As in \cite{DPSV}, our proof is based on the following potential function
\[
\Phi \;=\; \sum_{i\in B} \gamma_i^2
\]

We will show that in a phase, this potential decreases by at least a multiplicative factor of 
\(
1 - 1/n^3.
\)
We will classify phases into three types based on how they terminate, and we quantify the progress made in each type of phase.

\begin{description}
    \item[Type I:] Phase ends in Step 9(c)(ii), when a set $S \subseteq J$ goes tight.
    \item[Type II:] Phase ends in Step 9(c)(iii)(A), when a buyer $i$ has all money returned and is removed.
    \item[Type III:] Phase ends in Step 9(c)(iii)(D), when buyer $i$ has partial money returned.
\end{description}

\medskip

\begin{remark}
	\label{rem.l2}
The main difficulty in establishing the running time of Algorithm \ref{alg.one}, as well as the DPSV algorithm, comes from the use of balanced flows and $\ell_2$ norm of the surplus vector. The question arises: Can we not argue about the $\ell_1$ norm of the surplus vector? It turns out there is an infinite family of instances for the DPSV algorithm for which the decrease in the $\ell_1$ norm of the surplus vector is inverse exponential. 

The next question is: what is the advantage of using $\ell_2$ norm? To answer this, consider two 2-dimensional vectors $(1, 0)$ and $(1/2, 1/2)$. The $\ell_1$ norm of both vectors is 1. On the other hand, the $\ell_2$ norms are $1$ and $1/\sqrt{2}$, respectively. In an iteration, we may go from the first to close to the second vector, resulting in almost no improvement in the $\ell_1$ norm-based potential function. On the other hand,  there is a substantial improvement in the $\ell_2$ norm-based potential function. Indeed, certain iterations of the algorithm do not lead to much decrease in the total surplus, but they make the surplus vector more balanced, thereby increasing the potential for a substantial improvement in future iterations.  
\end{remark}

Lemma \ref{lem.surplus-decrease}, which is based on Lemma 8.5 and Corollary 8.6 in \cite{DPSV}, plays a critical   role in \cite{DPSV} and our proof for showing that sufficient progress is made towards decreasing the $\ell_2^2$ norm of the surplus vector in an iteration. For this reason, we have provided a detailed proof, which is more intuitive and complete than Lemma 8.5 from \cite{DPSV}, and may be of independent interest.

\begin{lemma}
\label{lem.surplus-decrease}
Let $p$ and $p^{*}$ be the price vectors at the beginning and end of an iteration of the algorithm in which no money is returned (so $r$ is fixed). Let $f$ and $f^*$ be balanced flows in $\cN(p, r)$ and $\cN(p^*, r)$, respectively. Suppose that for some buyer $i$ and some $\sigma>0$ we have
\[
\gamma_i(f^{*}) \;=\; \gamma_i(f)-\sigma.
\]
Then
\[
\|\gamma(f^{*})\|_2^2 \;\le\; \|\gamma(f)\|_2^2 - \sigma^2.
\]
\end{lemma}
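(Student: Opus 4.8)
The plan is to compare the two balanced flows through their common surplus coordinates. Write $\gamma=\gamma(f)$ and $\gamma^{*}=\gamma(f^{*})$, and for a buyer $k$ put $A_k=\max\{0,\gamma_k-\gamma^{*}_k\}$ and $B_k=\max\{0,\gamma^{*}_k-\gamma_k\}$, so $\gamma_k-\gamma^{*}_k=A_k-B_k$; by hypothesis the distinguished buyer $i$ has $A_i=\sigma>0$ and $B_i=0$. The first step I would record is that $\cN(p^{*},r)$ dominates $\cN(p,r)$: in an iteration that returns no money, $r$ does not change, each source capacity $p^{*}_j$ is at least its earlier value $p_j$ (scaled up by $\theta\ge 1$ for $j\in J$, unchanged otherwise), the only edges added are of the form $(j,i)$ with $j\in G-J$ (Step~9(c)(v)), none is deleted, and no $(k,t)$ capacity changes. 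Hence $f$ is feasible in $\cN(p^{*},r)$; since $f^{*}$ is balanced and therefore a maximum flow there, $|f^{*}|\ge|f|$. As $r$ is common to both networks, $\sum_k\gamma_k=\sum_k(m_k-r_k)-|f|$ and likewise for $\gamma^{*}$, so
\[
\sum_k A_k-\sum_k B_k \;=\; \sum_k(\gamma_k-\gamma^{*}_k) \;=\; |f^{*}|-|f| \;\ge\; 0.
\]

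The crux is the following structural claim: \emph{if $\gamma^{*}_d<\gamma_d$ and $\gamma^{*}_u>\gamma_u$, then $\gamma^{*}_d\ge\gamma^{*}_u$}; equivalently, with $\mu:=\min\{\gamma^{*}_k:A_k>0\}$, every buyer $u$ with $B_u>0$ satisfies $\gamma^{*}_u\le\mu$. I would prove it by contradiction: assuming $\gamma^{*}_d<\gamma^{*}_u$, I would exhibit a path from $d$ to $u$ in the residual graph $\cR(f^{*})$ of $f^{*}$ in $\cN(p^{*},r)$, which is forbidden by Property~1 (Theorem~8.4 of \cite{DPSV}) because $f^{*}$ is balanced. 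The tool is the flow $g=f-f^{*}$ on $\cR(f^{*})$ (legal since $f$ is feasible in $\cN(p^{*},r)$): it conserves flow at every good and buyer, has net $s$-to-$t$ value $|f|-|f^{*}|\le 0$, and on edge $(k,t)$ carries net value $f(k,t)-f^{*}(k,t)=\gamma^{*}_k-\gamma_k$, so it pushes flow out of $t$ along the reverse of $(d,t)$ and into $t$ along $(u,t)$. Decomposing $g$ into paths and cycles and splicing the portion that leaves $d$ onto the portion that enters $u$---rerouting through $t$, and through $s$ when necessary---would yield a walk from $d$ to $u$ in $\cR(f^{*})$, hence a path. Carrying out this bookkeeping in every configuration of the decomposition (and invoking that $f$ is itself balanced in $\cN(p,r)$ to dispatch the residual cases) is the one genuinely delicate point, and is where the proof should be made more explicit than Lemma~8.5 of \cite{DPSV}.

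Granting the structural claim, the rest is a short computation. Since $\gamma_k+\gamma^{*}_k=2\gamma^{*}_k+A_k$ when $A_k>0$ and $\gamma_k+\gamma^{*}_k=2\gamma^{*}_k-B_k$ when $B_k>0$,
\[
\|\gamma\|_2^2-\|\gamma^{*}\|_2^2 \;=\; \sum_k (A_k-B_k)(\gamma_k+\gamma^{*}_k) \;=\; \sum_k A_k^2+\sum_k B_k^2+2\left(\sum_k A_k\gamma^{*}_k-\sum_k B_k\gamma^{*}_k\right).
\]
The square terms are nonnegative and $\sum_k A_k^2\ge A_i^2=\sigma^2$. For the bracket, the structural claim gives $\gamma^{*}_k\ge\mu$ whenever $A_k>0$ and $\gamma^{*}_k\le\mu$ whenever $B_k>0$, while $\mu\ge0$ since surpluses are nonnegative; combining with $\sum_k A_k\ge\sum_k B_k$ from the first step,
\[
\sum_k A_k\gamma^{*}_k \;\ge\; \mu\sum_k A_k \;\ge\; \mu\sum_k B_k \;\ge\; \sum_k B_k\gamma^{*}_k.
\]
Hence $\|\gamma\|_2^2-\|\gamma^{*}\|_2^2\ge\sigma^2$, which is the assertion. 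The main obstacle, as flagged, is the structural claim---specifically exhibiting the required residual path from $d$ to $u$ in all cases of the decomposition of $f-f^{*}$---whereas the domination step and the algebra above are routine.
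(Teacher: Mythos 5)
Your overall strategy (compare $f$ and $f^*$ directly, split the surplus change into positive and negative parts $A_k,B_k$, and control the cross-term via a residual-path/balancedness argument) is genuinely different from the paper's, which never compares all buyers at once: the paper decomposes $f^*-f$ in $\cR(f)$, augments $f$ only along the components that use the arc $(i,t)$ to build an intermediate flow $h$ with $\gamma_i(h)=\gamma_i(f^*)$, applies Property~1 of $f^*$ to the \emph{specific} buyers whose surplus rose in $h$, and then invokes the algebraic Lemma~\ref{lem.DPSV}. Unfortunately, your argument has a real gap: the structural claim ``if $\gamma^*_d<\gamma_d$ and $\gamma^*_u>\gamma_u$ then $\gamma^*_d\ge\gamma^*_u$'' is false. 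Decomposing $g=f-f^*$ in $\cR(f^*)$ does not produce a residual path from every decreased buyer $d$ to every increased buyer $u$. The component of the decomposition using $(t,d)$ may be a $t\!\to\!s$ path rather than a cycle, in which case it pairs $d$ with nothing; and when there \emph{is} a cycle $u\to t\to d'\to\cdots\to u$, the buyer $d'$ it pairs with $u$ need not be $d$. Your ``splicing through $t$'' cannot rescue this, because Property~1 requires a residual path from $d$ to $u$ avoiding $t$ (otherwise the improving circulation $t\to d\to\cdots\to u\to t$ would visit $t$ twice); splicing two components at $t$ therefore produces a walk that Property~1 does not speak to. Concretely, two disjoint two-buyer gadgets, one where the decreased buyer's final surplus is $3$ and its partner's rises to $3$, the other where the decreased buyer ends at $3.5$ and its partner rises to $3.5$, already violate your claim: the first gadget's $d$ has $\gamma^*_d=3$ while the second gadget's $u$ has $\gamma^*_u=3.5$, yet there is no residual path between them at all.

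The damage is localized to the one inequality $\sum_k B_k\gamma^*_k\le\mu\sum_k B_k$; the domination step, the identity $\|\gamma\|_2^2-\|\gamma^*\|_2^2=\sum_k A_k^2+\sum_k B_k^2+2\bigl(\sum_k A_k\gamma^*_k-\sum_k B_k\gamma^*_k\bigr)$, and the bound $\sum_k A_k^2\ge\sigma^2$ are all fine. A correct version of your route is available, but it must use the decomposition of $g$ more carefully: each increased buyer $u$ is paired, cycle by cycle, with \emph{some} decreased buyer $d'(u)$ for which $\gamma^*_{d'(u)}\ge\gamma^*_u$ (the segment $d'\to\cdots\to u$ of the cycle is a $t$-free residual path in $\cR(f^*)$), while the flow that escapes along $t\!\to\!s$ paths only adds to $\sum_k A_k\gamma^*_k$. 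Summing the per-cycle inequalities weighted by cycle flow then yields $\sum_k A_k\gamma^*_k\ge\sum_k B_k\gamma^*_k$ directly, with no need for a global threshold $\mu$. That repaired argument is sound and is a legitimately different proof from the paper's, but as written your proof relies on a false intermediate claim.
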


\begin{proof}
Since this iteration raises prices but never returns money, the capacities of edges from the source do not decrease, and the capacities into $t$ remain fixed. If this iteration ended in Step 9(c)(i), $N(p^*,r)$ contains an additional edge, compared to $N(p,r)$. Hence $f$ is feasible in $N(p^{*},r)$. Since $f^{*}$ is a balanced (and thus maximum) flow in $N(p^{*},r)$, we have $|f^{*}|\ge |f|$.

Consider the difference $f^{*}-f$ in the residual network of flow $f$ in $N(p^{*},r)$. This difference decomposes into a set of $s\!\to\!t$ augmenting paths of total value $|f^{*}|-|f|$ together with a circulation of value $|f|$. Select from this decomposition exactly those paths/cycles that traverse the edge $(i,t)$, and \emph{augment} $f$ along these to obtain a feasible flow $h$ in $N(p^{*},r)$. By construction, the flow on $(i,t)$ increases by exactly $\sigma$, and therefore
\[
\gamma_i(h) \;=\; \gamma_i(f)-\sigma \;=\; \gamma_i(f^{*}).
\]

Since $f^{*}$ is the balanced flow for $N(p^{*},r)$ and $h$ is a feasible flow in that network, we have
\[
\|\gamma(f^{*})\|_2 \;\le\; \|\gamma(h)\|_2.
\]
Thus it suffices to prove
\[
\|\gamma(h)\|_2^2 \;\le\; \|\gamma(f)\|_2^2 - \sigma^2.
\]

For establishing this inequality, we need to consider only buyers whose surplus increases in $h$ (other than $i$).
  For each such buyer $j$, let $\sigma_j\ge 0$ denote its surplus increase. Flow conservation in the selected augmentation implies $\sum_j \sigma_j \le \sigma$.

Furthermore, in $f^{*}$ there is a residual path from $i$ to each such $j$ (because the portions of $f^{*}-f$ used to form $h$ include these paths). Now Property~1 of balanced flows (no residual path from strictly smaller
surplus to strictly larger surplus), implies 
\[
\gamma_i(f^{*}) \;\ge\; \gamma_j(f^{*}).
\]
Since
$\gamma_i(h)=\gamma_i(f^{*})$
and
$\gamma_j(h)\le \gamma_j(f^{*})$,
we have for all affected $j$:
\[
\gamma_i(h) \;\ge\; \gamma_j(h).
\]

Finally, for concreteness, assume that there are $k$ such buyers $j$ namely $i_1, i_2, \ldots, i_k$ and positive values $\sigma_1, \ldots , \sigma_k$ with $\sum_{l=1}^k \sigma_l \leq \sigma$ such that for $1 \leq l \leq k$: 
\[ \gamma_i(h) = \gamma_i(f) - \sigma \geq \gamma_{i_l}(f) + \sigma_l = \gamma_{i_l}(h) \]  

Now using Lemma \ref{lem.DPSV} (i.e., which is Lemma 8.3 from \cite{DPSV}), we get 
\[
\|\gamma(f)\|_2^2 \;-\; \|\gamma(h)\|_2^2 \;\ge\; \sigma^2.
\]

Combining this with $\|\gamma(f^{*})\|_2\le\|\gamma(h)\|_2$ yields
\[
\|\gamma(f)\|_2^2 - \|\gamma(f^{*})\|_2^2 \;\ge\; \sigma^2,
\]
as required.

\end{proof}

\begin{lemma}
\label{lem.DPSV}
(Lemma 8.3 from \cite{DPSV})
Assume that $a \geq b_i \geq 0$ for $i = 1, 2,... , n$. Assume further that $\sum_{l=1}^k \sigma_l \leq \sigma$ , where $\sigma , \sigma_l \geq 0$. Then 
$$   {\|(a+ \sigma , b_1 - \sigma_1, b_2 - \sigma_2, \ldots  b_n -  \sigma_n)\|^2} \leq {\|{(a, b_1, b_2, \ldots , b_n)\|}^2}  - \sigma^2 . $$
\end{lemma}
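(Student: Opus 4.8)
The plan is to reduce the multivariate inequality to a one-dimensional computation by processing the decrements one at a time, or more directly, by squaring everything out and using the hypothesis $a \ge b_i \ge 0$ together with $\sum_l \sigma_l \le \sigma$. Write $L = \|(a+\sigma,\, b_1-\sigma_1,\ldots,b_n-\sigma_n)\|^2$ and $R = \|(a,b_1,\ldots,b_n)\|^2$. Expanding, $L - R = (a+\sigma)^2 - a^2 + \sum_{l=1}^k\big((b_{i_l}-\sigma_l)^2 - b_{i_l}^2\big) = 2a\sigma + \sigma^2 + \sum_{l=1}^k(-2 b_{i_l}\sigma_l + \sigma_l^2)$. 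We want to show this is $\le -\sigma^2$, i.e., $2a\sigma + 2\sigma^2 + \sum_l \sigma_l^2 \le 2\sum_l b_{i_l}\sigma_l$. That last inequality is \emph{false} in general (take all $b$'s zero), so the statement as I have transcribed it must really be asserting the bound with $+\sigma^2$ replaced appropriately --- I will instead prove the intended monotonicity estimate, which is that moving mass of total size at most $\sigma$ \emph{out of} the small coordinates and adding exactly $\sigma$ to the large coordinate $a$ can only decrease the squared norm by at least $\sigma^2$ relative to the configuration where the $b$'s are left alone but $a$ is also left alone; concretely I will show $\|(a,b_1,\ldots,b_n)\|^2 - \|(a+\sigma, b_1-\sigma_1,\ldots)\|^2 \ge \sigma^2$ under the correct sign conventions matching how the lemma is invoked (in the proof of Lemma~\ref{lem.surplus-decrease}, $h$ has the large coordinate \emph{decreased} by $\sigma$ and small coordinates \emph{increased}).

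Reading the invocation carefully: in Lemma~\ref{lem.surplus-decrease} we have $\gamma_i(h) = \gamma_i(f) - \sigma$ with $\gamma_i(f) \ge \gamma_{i_l}(f) + \sigma_l = \gamma_{i_l}(h)$, and we want $\|\gamma(f)\|_2^2 - \|\gamma(h)\|_2^2 \ge \sigma^2$. So the correct statement to prove is: with $a \ge b_i \ge 0$, $a - \sigma \ge b_{i_l} + \sigma_l \ge 0$, and $\sum_l \sigma_l \le \sigma$,
\[
\|(a,b_1,\ldots,b_n)\|^2 - \|(a-\sigma,\, b_1+\sigma_1,\ldots,b_n+\sigma_n)\|^2 \;\ge\; \sigma^2.
\]
I will prove exactly this (equivalently the displayed statement with the roles of adding and subtracting swapped to match the text). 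First I would expand the left side: it equals $a^2 - (a-\sigma)^2 + \sum_l\big(b_{i_l}^2 - (b_{i_l}+\sigma_l)^2\big) = 2a\sigma - \sigma^2 - \sum_l 2 b_{i_l}\sigma_l - \sum_l \sigma_l^2$. Using $\sum_l \sigma_l^2 \le \big(\sum_l \sigma_l\big)^2 \le \sigma^2$ is the wrong direction, so instead I bound $\sum_l(2 b_{i_l}\sigma_l + \sigma_l^2) = \sum_l \sigma_l(2 b_{i_l} + \sigma_l) \le \sum_l \sigma_l \cdot 2(b_{i_l}+\sigma_l) \le \sum_l \sigma_l \cdot 2(a-\sigma) \le 2\sigma(a-\sigma)$, where the middle step uses $b_{i_l}+\sigma_l \le a - \sigma$ (hence $2 b_{i_l}+\sigma_l \le 2 b_{i_l}+2\sigma_l \le 2(a-\sigma)$) and the last uses $\sum_l \sigma_l \le \sigma$ together with $a - \sigma \ge 0$. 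Therefore the left side is at least $2a\sigma - \sigma^2 - 2\sigma(a-\sigma) = 2a\sigma - \sigma^2 - 2a\sigma + 2\sigma^2 = \sigma^2$, as desired.

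The key steps, in order: (1) expand the difference of squared norms into a sum of scalar terms; (2) for each moved coordinate, use the ordering hypothesis $b_{i_l}+\sigma_l \le a-\sigma$ to replace $2b_{i_l}+\sigma_l$ by $2(a-\sigma)$; (3) pull out the common factor and apply $\sum_l \sigma_l \le \sigma$ with nonnegativity of $a-\sigma$; (4) simplify the resulting expression to get exactly $\sigma^2$. The main obstacle --- and the reason a naive attempt fails --- is the temptation to bound $\sum_l \sigma_l^2$ directly by $\sigma^2$, which goes the wrong way; the correct move is to keep $\sigma_l^2$ \emph{together with} the cross term $2 b_{i_l}\sigma_l$ and bound the combined quantity $\sigma_l(2b_{i_l}+\sigma_l)$ using the \emph{shifted} ordering constraint $b_{i_l}+\sigma_l \le a-\sigma$ rather than the unshifted $b_{i_l}\le a$. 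This is precisely the extra slack that the hypothesis "$a \ge b_i$" must be read as providing in its post-move form, and it is exactly what makes the $-\sigma^2$ (rather than merely $\le 0$) improvement go through. Everything else is routine algebra.
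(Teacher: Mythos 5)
Your proof is correct, and your diagnosis of the statement is the most valuable part: as printed, the lemma is indeed false (your all-$b_i=0$ counterexample works), because the inequality has been transcribed with the wrong orientation. Lemma 8.3 of \cite{DPSV} asserts that raising the largest coordinate by $\sigma$ and lowering smaller ones by $\sigma_l$ makes the vector \emph{less} balanced, i.e.\ $\|(a+\sigma,b_1-\sigma_1,\ldots,b_n-\sigma_n)\|^2 \ge \|(a,b_1,\ldots,b_n)\|^2 + \sigma^2$, and it is this form that the invocation in Lemma~\ref{lem.surplus-decrease} needs (there the ordered vector is $\gamma(h)$, and $\gamma(f)$ is obtained from it by raising coordinate $i$ by $\sigma$ and lowering the $\gamma_{i_l}$ by $\sigma_l$). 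The paper itself offers no proof — it cites \cite{DPSV} — so there is nothing to compare line by line; but your argument is, after the change of variables $a' = a-\sigma$, $b'_l = b_{i_l}+\sigma_l$, exactly the standard expand-and-bound proof of the DPSV lemma: your hypotheses $a-\sigma \ge b_{i_l}+\sigma_l \ge 0$ become $a' \ge b'_l \ge 0$, and your bound $\sigma_l(2b_{i_l}+\sigma_l) \le 2(a-\sigma)\sigma_l$ is the shifted version of DPSV's one-line estimate $-2b'_l\sigma_l + \sigma_l^2 \ge -2a'\sigma_l$. Two small remarks: (i) in the degenerate case $k=0$ your derivation needs $a \ge \sigma$, which is not among your stated hypotheses but holds in the application since $a-\sigma = \gamma_i(h)$ is a surplus and hence nonnegative (or one can simply note the claim is trivial there by the DPSV form); (ii) it would be cleaner to state and prove the lemma in its original DPSV orientation and then apply it to $\gamma(h)$, rather than carrying the shifted hypotheses — but mathematically your version and the original are equivalent, and your identification of where the extra $\sigma^2$ of slack comes from is exactly right.
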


Next we quantify the decrease the in the potential $\Phi$ in a Type~I phase. 

\begin{lemma}
	\label{lem.No-I-phases}
	{\bf (Bounding Type I Phases)}
During any Type~I phase, the potential 
\[
\Phi \;=\; \sum_{i\in B} \gamma_i^2
\]
decreases by at least a multiplicative factor of 
\(
1 - 1/n^3.
\)
\end{lemma}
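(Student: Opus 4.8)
The plan is to run the standard DPSV phase-analysis template on the potential $\Phi=\sum_{i\in B}\gamma_i^2$, feeding Lemma~\ref{lem.surplus-decrease}. Let $\delta$ be the maximum surplus at the start of the phase (the value computed in Step~3 of ``New phase''). Since $|B|\le n$ and every buyer has surplus at most $\delta$ at that moment, $\Phi\le n\delta^2$ at the start of the phase. So it suffices to show $\Phi$ decreases by at least $\delta^2/n^2$ over the phase: since $\delta^2\ge\Phi_{\mathrm{start}}/n$, this yields $\Phi_{\mathrm{end}}\le\Phi_{\mathrm{start}}-\delta^2/n^2\le\Phi_{\mathrm{start}}-\Phi_{\mathrm{start}}/n^3=\Phi_{\mathrm{start}}(1-1/n^3)$.

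First I would verify that Lemma~\ref{lem.surplus-decrease} can be applied over the whole Type~I phase, not just a single iteration. Within such a phase the network is modified only by raising prices of goods in $J$, by adding mbpb edges (Steps 9(c)(i) and 9(c)(v)), and by deleting a zero-degree buyer from $Z$ (Step 9(c)(iv)); no money is returned to any surviving buyer, and a deleted $Z$-buyer carries no flow in the (start-of-phase) balanced flow, so its removal leaves the balanced flow on the remaining buyers unchanged and only decreases $\Phi$. Hence the start-of-phase balanced flow $f$ remains feasible in the end-of-phase network $\cN(p^{*},r)$, and the proof of Lemma~\ref{lem.surplus-decrease} applies verbatim with $p,p^{*}$ the price vectors at the beginning and end of the entire phase.

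The heart of the argument is to exhibit a single buyer whose surplus drops by at least $\delta/n$ during the phase. A Type~I phase ends (Step 9(c)(ii)) when a set $S\subseteq J$ goes tight, $\worth(S)=\worth(\Gamma(S))$; by the Invariant the cut putting $S\cup\Gamma(S)$ on the source side is then a minimum $s$--$t$ cut, so every buyer in $\Gamma(S)$ has surplus $0$ at the end of the phase, and $\Gamma(S)$ is a nonempty subset of the active set $I$. I would then use that the common surplus of $I$ is nonincreasing over the phase and that each enlargement of $I$ in Step 9(c)(i) absorbs buyers whose surplus equals the then-current common value of $I$ (this follows from the definition of $I'$ via residual reachability to $I$ together with Property~1 of balanced flows). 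Tracing back through the at most $n$ enlargements of $I$, any buyer in $\Gamma(S)$ either already lay in $I$ at the start of the phase — so its surplus fell from $\delta$ to $0$ — or sits at the end of a chain of at most $n$ buyers whose joining-time surpluses are nonincreasing and begin at $\delta$; in either case some buyer in that chain sheds at least $\delta/n$ of surplus. Feeding that buyer into Lemma~\ref{lem.surplus-decrease} over the phase gives $\|\gamma(f^{*})\|_2^2\le\|\gamma(f)\|_2^2-(\delta/n)^2$, i.e.\ $\Phi$ drops by at least $\delta^2/n^2$, as needed.

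I expect the main obstacle to be precisely this last structural step: pinning down which buyers have surplus $0$ when $S$ goes tight and certifying an $\Omega(\delta)$ single-buyer drop despite the fact that $I$ grows over the phase and buyers may enter it with surpluses well below $\delta$. The resolution has to lean on the balanced-flow machinery (Property~1) to show that $I$ always carries one nonincreasing common surplus value and that $\Gamma(S)$ drains a nonempty subset of it down to $0$; once that is established, the $\delta/n$ bound — indeed, with more care, a full $\delta$ drop, which would even give the stronger factor $1-1/n$ — follows by chasing the common-surplus value back to the beginning of the phase. (We state only $1-1/n^3$ here to match the guarantee available uniformly across Type~II and Type~III phases.)
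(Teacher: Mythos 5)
Your high-level skeleton matches the paper exactly: set $\delta=\max_i\gamma_i$ at phase start, observe $\Phi_{\mathrm{start}}\le n\delta^2$, reduce the lemma to showing $\Phi$ drops by $\delta^2/n^2$ over the phase, and obtain that drop by exhibiting a single buyer whose surplus falls by at least $\delta/n$ and feeding it into Lemma~\ref{lem.surplus-decrease}. Applying Lemma~\ref{lem.surplus-decrease} once across the whole phase (after checking that $f_0$ stays feasible, $Z$-buyer removals carry no flow, etc.) instead of per-iteration and then telescoping via monotonicity of $\Phi$, as the paper does, is a harmless packaging difference.

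The genuine gap is in the middle step. You lean on the assertion that ``$I$ always carries one nonincreasing common surplus value'' and that newly absorbed buyers enter with surplus \emph{equal} to that common value. Neither is established, and the first is false in general: if two buyers in $I$ are each attached to disjoint mbpb goods of different current prices, then as $\theta$ rises their edge-to-$t$ flows grow at rates proportional to those prices, so their surpluses diverge immediately — there is no common value. Property~1 only gives a one-sided inequality (a newly added $i'$ with a residual path to $h\in I$ satisfies $\gamma_{i'}\ge\gamma_h$), not equality. Consequently the ``chain of at most $n$ buyers with nonincreasing joining-time surpluses'' does not by itself localize a $\delta/n$ drop in a \emph{single} buyer's surplus; you only get a sequence of one-sided comparisons at different time instants that do not compose into a single-buyer drop. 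The paper's fix is precisely to avoid the common-value claim: it tracks $\delta_t=\min_{i\in I_t}\gamma_i(p_t)$, notes $\delta_0=\delta$ and $\delta_k=0$ (since $\Gamma(S)\subseteq I_k$ has surplus $0$), applies pigeonhole over $k\le n$ iterations to get an iteration with $\delta_{t-1}-\delta_t\ge\delta/n$, and then uses Property~1 to argue the minimizer in $I_t$ cannot be a newly joined buyer, so the drop is realized by some $i\in I_{t-1}$. You should replace the common-surplus/chain reasoning with this min-sequence argument. Relatedly, the parenthetical claim that ``with more care'' you would get a full $\delta$ single-buyer drop and hence a $1-1/n$ factor rests on the same unjustified common-surplus premise and should be withdrawn; the $\delta/n$ (hence $1-1/n^3$) bound is what the argument actually supports.
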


\begin{proof}
Let $p_0$ be the price vector at the start of the Type~I phase and let 
$f_0$ be the balanced flow in $N(p_0,r)$, where $r$ remains fixed throughout 
the phase.  Let
\[
\delta \;=\; \max_{i\in B} \gamma_i(f_0)
\qquad\text{and}\qquad
I_0 \;=\; \{\, i \in B : \gamma_i(f_0)=\delta \,\}.
\]
Let $J_0=\Gamma(I_0)$ denote the set of goods connected to buyers in $I_0$.

During a Type~I phase the algorithm repeatedly increases the prices of all 
goods in~$J$, and whenever a new maximum bang-per-buck edge enters the 
equality graph (Step~9(c)(i)), we update $I$ and $J$ and recompute a 
balanced flow, thereby starting a new \emph{iteration} of the phase.
Let these iterations be $0,1,\dots,k$. 
Since each such iteration adds at least one new good to~$J$ and $|J|\le n$,
we have $k\le n$.

For each iteration $t$, let $p_t$ be the price vector and let $I_t$ be the 
corresponding set of active buyers.  Define
\[
\delta_t \;=\; \min_{i\in I_t} \gamma_i(p_t).
\]
At the start of the phase $\delta_0=\delta$, and at the end of the phase 
(event~(1) of Step~9(c)) we have $\delta_k=0$ because some buyer in~$I_k$ has 
surplus~$0$.

Since $\delta_0=\delta$ and $\delta_k=0$ and there are at most $k\le n$ 
iterations, there exists an iteration $t\in\{1,\dots,k\}$ such that 
\[
\delta_{t-1}-\delta_t \;\ge\; \frac{\delta}{n}.
\tag{1}
\label{eq:big-drop}
\]

We next show that this drop occurs in the surplus of a single buyer already in 
$I_{t-1}$.  
If $i'\in I_t\setminus I_{t-1}$ is a newly added buyer, then by Step~9(c)(i)(C)
there is a path in the residual graph of the balanced flow (for prices $p_t$)
from $i'$ to some buyer $h\in I_{t-1}$.  
By Property~1 of balanced flows, which prohibits residual paths from lower 
surplus to higher surplus buyers, we obtain 
$\gamma_{i'}(p_t)\ge \gamma_h(p_t)\ge\delta_{t-1}$.  
Hence $\delta_t$ cannot be achieved by a newly added buyer, and therefore 
there exists $i\in I_{t-1}$ such that 
\[
\gamma_i(p_{t-1}) - \gamma_i(p_t) \;\ge\; \delta_{t-1}-\delta_t 
\;\ge\; \frac{\delta}{n}.
\tag{2}
\label{eq:sigma}
\]

Let $\sigma = \gamma_i(p_{t-1}) - \gamma_i(p_t)$; then $\sigma \ge \delta/n$.  
Now we apply Lemma \ref{lem.surplus-decrease} to balanced flows $f_{t-1}$ and $f_t$ at prices $p_{t-1}$ and 
$p_t$. If the surplus of some buyer decreases by $\sigma$, then
\[
\|\gamma(f_{t-1})\|_2^2 \;-\; \|\gamma(f_t)\|_2^2 
\;\ge\; \sigma^2.
\tag{3}
\label{eq:lemma4}
\]

Therefore, from iteration $t-1$ to iteration~$t$ we have
\[
\Phi_{t-1} - \Phi_t \;\ge\; \sigma^2 \;\ge\; \frac{\delta^2}{n^2}.
\tag{4}
\label{eq:phase-drop}
\]

In all other iterations of the phase, the potential $\Phi$ is nonincreasing:
when prices rise, we recompute a balanced flow at the new prices, resulting 
in a surplus vector with $\ell_2$-norm no larger than before.  
Thus overall, from the start to the end of the Type~I phase,
\[
\Phi_{\mathrm{after}} 
\;\le\; \Phi_{\mathrm{before}} - \frac{\delta^2}{n^2}.
\tag{5}
\label{eq:total}
\]

Finally, at the start of the phase every buyer has surplus at most $\delta$,
so
\[
\Phi_{\mathrm{before}} 
\;=\; \sum_{i\in B} \gamma_i^2 
\;\le\; n\,\delta^2.
\]
Combining this with~\eqref{eq:total} gives
\[
\Phi_{\mathrm{after}}
\;\le\; 
\Phi_{\mathrm{before}}
\;-\; \frac{\delta^2}{n^2}
\;\le\; 
\Phi_{\mathrm{before}}\left(1-\frac{1}{n^3}\right),
\]
which completes the proof.
\end{proof}

\bigskip
\bigskip

\begin{lemma}
\label{lem.TypeII-count}
	{\bf (Bounding Type II Phases)}
There are at most $n$ Type II phases throughout the algorithm.
\end{lemma}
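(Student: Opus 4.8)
The plan is a straightforward accounting argument on the buyer set $B$. First I would observe that every Type~II phase, by definition, terminates in Step~9(c)(iii)(A), whose defining action is to return \emph{all} of some buyer $i$'s money and then delete $i$ from both $B$ and $\cN(p,r)$. Thus each Type~II phase decreases $|B|$ by exactly one, and the deletion is permanent.

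Next I would verify that no step of Algorithm~\ref{alg.one} ever adds a buyer back into $B$. Scanning the pseudocode: Initialization and the ``New phase'' block only read $B$ and reconfigure edges; the ``New iteration'' block raises prices and inserts mbpb edges; the set-update lines (Steps~9(c)(i)(C)--(E), 9(c)(iv), 9(c)(v)) only shuffle buyers among $I$, $Z$, and $(B-I)-Z$; and the only two lines that modify $B$ itself are Step~9(c)(iii)(A) and Step~9(c)(iv), both of which \emph{remove} a buyer. Hence $|B|$ is nonincreasing throughout the run, starting from its initial value $n$.

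Combining the two observations, the number of Type~II phases is at most the total decrease $|B|$ can sustain, namely at most $n$. I do not anticipate any genuine obstacle; the one point worth a sentence is to note that the removal in Step~9(c)(iii)(A) is legitimate, since it is triggered only when $(s, G\cup B\cup t)$ is an $s$--$t$ min-cut after setting $i$'s left-over money to $0$, so edge $(i,t)$ then carries no flow and deleting $i$ leaves the flow feasible and preserves the Invariant (as already argued in the description of the algorithm). This completes the proof.
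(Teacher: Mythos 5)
Your proof is correct and takes the same approach as the paper: each Type~II phase permanently removes one buyer from $B$, which starts with $n$ buyers, so there can be at most $n$ such phases. The extra verification that no step re-inserts a buyer and that the removal preserves the Invariant is sound but not essential to the counting argument the paper uses.
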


\begin{proof}
Each Type II phase removes one buyer from $B$. Since we start with at most $n$ buyers, there can be at most $n$ such phases.
\end{proof}

\bigskip

We next consider Type III phases which end in Step 9(c)(iii)(D), i.e., when buyer $i$ has partial money returned.

\begin{lemma}
	\label{lem.No-III-phases}
		{\bf (Bounding Type III Phases)}
During any Type~III phase, the potential $\Phi$ decreases by at least a multiplicative factor of 
\(
1 - 1/(4n^3).
\)
\end{lemma}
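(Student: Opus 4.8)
The plan is to follow the proof of Lemma~\ref{lem.No-I-phases} for the price‑raising part of the phase, and to add one new ingredient that controls the money‑return step. Let $p_0$ be the price vector at the start of the Type~III phase and $f_0$ the balanced flow in $\cN(p_0,r)$; put $\delta=\max_{i\in B}\gamma_i(f_0)$, so that $\Phi_{\mathrm{before}}\le n\delta^2$. It suffices to prove $\Phi_{\mathrm{after}}\le\Phi_{\mathrm{before}}-\delta^2/(4n^2)$. The phase consists of price‑raising iterations $0,1,\dots,k$ with $k\le n$ (each adds a good to $J$), during which $r$ is fixed, followed by the transfer in Step~9(c)(iii)(C). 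In each price‑raising iteration the balanced flow is recomputed on a network whose source capacities only grew and whose sink capacities were unchanged, so $\Phi$ is non‑increasing there (buyer removals from $Z$ and edge additions likewise do not increase $\Phi$); and, as in Lemma~\ref{lem.No-I-phases}, Lemma~\ref{lem.surplus-decrease} quantifies the drop when a single buyer's surplus falls by $\sigma$.

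\emph{New ingredient.} Let $f^{*}$ be the balanced flow in $\cN(p^{*},r)$ just before Step~9(c)(iii) fires for buyer $i$, and $f^{**}$ the balanced flow in $\cN(p^{*},r')$ just after $r_i$ money is returned (so the capacity of $(i,t)$ drops to $\worth(S)-\worth(T)$). I claim
\[
\|\gamma(f^{**})\|_2^2 \;\le\; \|\gamma(f^{*})\|_2^2 - \gamma_i(f^{*})^2,
\]
so the transfer never increases $\Phi$. First, $f^{*}(i,t)\ge\worth(S)-\worth(T)$: the Invariant holds just before Step~9(c)(iii)(A), so every good is fully sold in $f^{*}$, and a short flow‑across‑the‑cut computation for the cut $(s\cup S\cup T,\,(G-S)\cup(B-T)\cup t)$ of Lemma~\ref{lem.Inv-partial} (which carries no edges from $S$ into $B-T$) gives $\sum_{i'\in T}f^{*}(i',t)\ge\worth(S)$; combined with the capacity bound $\sum_{i'\in T,\ i'\ne i}f^{*}(i',t)\le\worth(T)$ (with $i$'s money taken to be $0$, per Step~9(c)(iii)(C)) this yields $f^{*}(i,t)\ge\worth(S)-\worth(T)$. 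Now obtain $h$ from $f^{*}$ by shrinking flow along $s\to j\to i\to t$ paths until $h(i,t)=\worth(S)-\worth(T)$; this flow is feasible in $\cN(p^{*},r')$, leaves $f(i',t)$ unchanged for every $i'\ne i$, and zeroes out $i$'s surplus, so $\|\gamma(h)\|_2^2=\|\gamma(f^{*})\|_2^2-\gamma_i(f^{*})^2$, and balancedness of $f^{**}$ gives $\|\gamma(f^{**})\|_2\le\|\gamma(h)\|_2$.

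\emph{Combining.} Let $I_t$ be the active set and $\delta_t=\min_{i\in I_t}\gamma_i(p_t)$; then $\delta_0=\delta$, and the buyer $i$ for whom Step~9(c)(iii) fires lies in $I_k$, so $\gamma_i(f^{*})\ge\delta_k$. If $\delta_k\le\delta/2$, then $\delta_0-\delta_k\ge\delta/2$ over at most $n$ iterations, so exactly as in Lemma~\ref{lem.No-I-phases} (localize an iteration's drop to a single buyer of $I_{t-1}$ via Property~1, then apply Lemma~\ref{lem.surplus-decrease}) some iteration decreases $\Phi$ by at least $(\delta/2n)^2=\delta^2/(4n^2)$; since all other iterations and the transfer are non‑increasing, $\Phi_{\mathrm{after}}\le\Phi_{\mathrm{before}}-\delta^2/(4n^2)$. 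If instead $\delta_k>\delta/2$, then $\gamma_i(f^{*})>\delta/2$, so the transfer alone decreases $\Phi$ by more than $\delta^2/4\ge\delta^2/(4n^2)$, and the price‑raising iterations are non‑increasing. In both cases $\Phi_{\mathrm{after}}\le\Phi_{\mathrm{before}}-\delta^2/(4n^2)\le\Phi_{\mathrm{before}}(1-1/(4n^3))$, using $\Phi_{\mathrm{before}}\le n\delta^2$.

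The main obstacle is the new ingredient: establishing $f^{*}(i,t)\ge\worth(S)-\worth(T)$ and arguing that the transfer can be realized by \emph{only} shrinking flow through $i$ (so that no other buyer's surplus rises). Both rest on the maximal‑min‑cut structure recorded in Lemma~\ref{lem.Inv-partial} together with the Invariant forcing every good to be fully sold in $f^{*}$; everything else is a transcription of the Type~I bookkeeping.
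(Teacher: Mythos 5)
Your proof is correct and follows the same high-level strategy as the paper: split the phase into price-raising iterations (handled via Property~1 and Lemma~\ref{lem.surplus-decrease} exactly as in Lemma~\ref{lem.No-I-phases}) and the final money-return step, then show that a large fraction of the total surplus drop must be localized in one of these and translate that into a potential drop of at least $\delta^2/(4n^2)$. The case split is cosmetically different -- the paper further splits its second case into ``$\delta_k \ge \delta/4$ or $\rho_i \ge \delta/4$'', whereas you go directly to $\gamma_i(f^*)>\delta/2$ -- but both are driven by the same pigeonhole over at most $n$ iterations.

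Where your write-up genuinely improves on the paper is the ``new ingredient.'' The paper asserts that upon returning money, ``the surplus of $i$ drops by at least $\delta/4$ and the surplus of other buyers remains unchanged,'' which conflates the returned amount $\rho_i = (m_i-r_i)-(\worth(S)-\worth(T))$ with the actual surplus drop $\gamma_i(f^*)$; the two coincide only if $f^*(i,t)=\worth(S)-\worth(T)$, and in general one only has $f^*(i,t)\ge\worth(S)-\worth(T)$, so $\gamma_i(f^*)\le\rho_i$. Moreover, the paper does not justify that no other buyer's surplus increases after rebalancing. Your argument fixes both issues: you prove $f^*(i,t)\ge\worth(S)-\worth(T)$ via the flow-across-the-cut computation using the maximal min-cut from Lemma~\ref{lem.Inv-partial}, and then explicitly construct a feasible flow $h$ in $\cN(p^*,r')$ that zeroes $i$'s surplus while leaving all other surpluses fixed, so that balancedness of $f^{**}$ yields $\norm{\gamma(f^{**})}_2^2\le\norm{\gamma(f^*)}_2^2-\gamma_i(f^*)^2$. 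This is a tighter and fully justified statement. One small caveat shared by both your write-up and the paper: the step ``$i\in I_k$ implies $\gamma_i(f^*)\ge\delta_k$'' implicitly takes $p_k$ to be the price vector at the \emph{end} of iteration $k$'s raise (i.e.\ $p_k=p^*$); this should be stated explicitly, since under the start-of-iteration convention $\gamma_i(f^*)$ could in principle fall below $\delta_k$. The fix is trivial (define $\delta^*=\min_{i'\in I}\gamma_{i'}(f^*)$ and split on $\delta^*$ vs.\ $\delta/2$), and does not affect the constants beyond a harmless factor.
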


\begin{proof}
We will use the notation set up in the proof of Lemma \ref{lem.No-I-phases}. Let $\delta$ be the maximum surplus of a buyer at the start of this phase. There are $k \leq n$ iterations, with iteration $k$ ending in the return of money $\rho_i$ to buyer $i$. Let $r$ be the vector of money returned at the start of the phase and $r + \rho_i$ at the end. Now there are two cases:

{\bf Case 1:} There is a buyer $j \in I_{k-1}$ whose surplus drops to $\leq \delta/2$. \\
If so, there exists an iteration $t\in\{1,\dots,k-1\}$ such that 
\[
\delta_{t-1}-\delta_t \;\ge\; \frac{\delta}{2n}.
\tag{1}
\label{eq:big-drop4}
\]

As in Lemma \ref{lem.No-I-phases}, applying Lemma \ref{lem.surplus-decrease}, the desired result follows.

\bigskip

{\bf Case 2:} The minimum surplus of a buyer in $I_{k-1}$ is $> \delta/2$. \\
Since $i$ is in a tight set at the end of the phase, its surplus drops to zero. Let $f^*$ be a balanced flow in $\cN(p^*, r)$ when $\alpha_i$ becomes 1 and before money is returned to $i$. Let the drop in the surplus of $i$ up to this point in iteration $k$ be $\delta_k$ (note that $\delta_k$ may be $<0$). When money is returned to $i$, her surplus drops further by $\rho_i$, therefore, 
\[ \delta_k + \rho_i > \delta/2 .\] 
Hence one of $\delta_k$ and $\rho_i$ is at least $\delta/4$. In the first case, the argument of Lemma \ref{lem.No-I-phases} applies and gives the desired result. In the second case, on returning money, the surplus of $i$ drops by at least $\delta/4$ and the surplus of other buyers remains unchanged, resulting in a drop in the $\ell_1$ norm of the surplus vector by the same amount. Consequently, the $\ell_2^2$ norm of the surplus vector drops by at least $(\delta^2/4)$ and it may decrease further on recomputing a balanced flow. Since $\Phi$ at the start of the phase is $\leq n \delta^2$, we get: 
\[
\Phi_{\mathrm{after}}
\;\le\; 
\Phi_{\mathrm{before}} - \left(\frac{\delta}{4}\right)^2 
\leq 
\Phi_{\mathrm{before}}\left(1-\frac{1}{16n}\right)
\;\le\; 
\Phi_{\mathrm{before}}\left(1-\frac{1}{4n^3}\right),
\]
which completes the proof.

\end{proof}

\bigskip
\bigskip

We now prove that Algorithm \ref{alg.one} runs in time polynomial in $n$, $\log U$, and the
bit-size of the initial money vector $(m_i)$.  Our analysis mirrors that of
DPSV~\cite{DPSV}, but includes two differences specific to the Arctic Auction:
(i) buyers may have money returned during the algorithm, and hence their
effective budgets $(m_i - r_i)$ decrease over time; and (ii) the network
$N(p,r)$ contains buyer–return arcs that must be accounted for in the bit-size
analysis.  Importantly, the return of money never increases any coordinate of
the surplus vector, and therefore never increases the potential function $\Phi$.

\begin{theorem}
	\label{thm.time}
	{\bf (Running Time)} \\
Let all numbers $m_i$ and $u_{ij}$ be nonnegative rationals whose numerators
and denominators are written with at most $L$ bits in total, and let
$M=\sum_{i\in B} m_i$ and $U=\max_{i,j} u_{ij}$.  
Then Algorithm \ref{alg.one} terminates after at most
\[
O\!\Bigl(n^3\bigl(\log n + n\log U + \log M\bigr)\Bigr)
\]
phases, and performs at most
\[
O\!\Bigl(n^5\bigl(\log n + n\log U + \log M\bigr)\Bigr)
\]
max-flow computations. 
\end{theorem}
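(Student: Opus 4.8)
The plan is to follow the DPSV running-time analysis: track the potential $\Phi=\sum_{i\in B}\gamma_i^2$ across phases, and separately bound the number of max-flow computations performed inside a single phase. First I would combine the three phase lemmas. Every Type~I phase (Lemma~\ref{lem.No-I-phases}) and every Type~III phase (Lemma~\ref{lem.No-III-phases}) multiplies $\Phi$ by a factor of at most $1-1/(4n^3)$, and there are at most $n$ Type~II phases in total (Lemma~\ref{lem.TypeII-count}); moreover a Type~II phase never increases $\Phi$, because raising prices followed by recomputing a balanced flow does not increase the $\ell_2$ norm of the surplus vector, returning money never raises a surplus coordinate, and deleting a buyer only removes a nonnegative term from $\Phi$. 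Hence if the run consists of $P$ phases, at least $P-n$ of them shrink $\Phi$ by the factor $1-1/(4n^3)$, so $\Phi_{\mathrm{final}}\le \Phi_{\mathrm{init}}\,(1-1/(4n^3))^{P-n}$.

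Next I would bound $\Phi$ above at the outset and below at the end of every non-terminal phase. After the initialization step each buyer's surplus is at most $m_i$, so $\Phi_{\mathrm{init}}\le\sum_i m_i^2\le M^2$. For the lower bound I would prove a bit-complexity lemma in the style of DPSV: at the end of any phase all prices $p_j$ and all returned amounts $r_i$ are rationals expressible with $O(\log n+n\log U+\log M)$ bits. Granting this, the Invariant says $(\{s\},B\cup G\cup\{t\})$ is an $s$--$t$ min-cut, so a balanced (hence maximum) flow has value $\sum_{j\in G}p_j$ and the total surplus equals exactly $M-\sum_{i\in B}r_i-\sum_{j\in G}p_j$, a rational whose denominator has $O(\log n+n\log U+\log M)$ bits. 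If the algorithm has not yet terminated this quantity is strictly positive, hence at least $2^{-O(\log n+n\log U+\log M)}$, and by Cauchy--Schwarz $\Phi\ge(\text{total surplus})^2/n\ge 2^{-O(\log n+n\log U+\log M)}$.

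Combining the two steps, as long as the algorithm is still running after $P$ phases we have $2^{-O(\log n+n\log U+\log M)}\le M^2(1-1/(4n^3))^{P-n}$; taking logarithms and using $\ln\!\bigl(1/(1-1/(4n^3))\bigr)=\Theta(1/n^3)$ gives $P=n+O\!\bigl(n^3(\log n+n\log U+\log M)\bigr)$, which is the stated phase bound and in particular shows the algorithm halts. For the max-flow count I would bound the work per phase: a phase has $O(n)$ iterations, since each Step~9(c)(i) event adds a new good to $J$ and $J$ only grows within a phase; the remaining events that force recomputing a balanced flow are the $Z$-insertions of Step~9(c)(v), at most one per buyer per phase, together with $O(1)$ min-cut computations at the phase boundary, while the $Z$-removals of Step~9(c)(iv) are free because the removed buyer's arc $(i,t)$ carries no flow and it has no incoming arcs. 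Thus each phase performs $O(n)$ balanced-flow computations, each costing $n$ max-flows (Section~8 of~\cite{DPSV}), i.e.\ $O(n^2)$ max-flows per phase; multiplying by the phase bound yields $O\!\bigl(n^5(\log n+n\log U+\log M)\bigr)$ max-flow computations.

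The main obstacle is the bit-complexity lemma, in particular controlling the returned amounts $r_i$, which have no counterpart in DPSV. The approach I would take is to argue inductively over phases that the current prices and returns are always the unique solution of a linear system of size $\poly(n)$ whose coefficients are $\pm1$ and ratios of the $u_{ij}$'s and whose right-hand side is built from the $m_i$'s: within a phase the prices of goods in $J$ stay in fixed ratios while the scale $\theta$ is pinned down by some equation $u_{ij}=\overline{p}_j\theta$, and the partial-return rule of Step~9(c)(iii)(C) sets $r_i=m_i-(\worth(S)-\worth(T))$ from a maximal min-cut. One must check that repeated partial returns to the same buyer over several phases, as well as the interleaving of successive tight sets, do not cause bit-complexities to accumulate, so that the $O(\log n+n\log U+\log M)$ bound is preserved throughout. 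A secondary point requiring care is verifying that the set-$Z$ bookkeeping does not push the per-phase max-flow count above $O(n^2)$.
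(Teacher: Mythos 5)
Your proposal is correct and follows essentially the same route as the paper's proof: potential $\Phi=\sum_i\gamma_i^2$, upper bound $\Phi_{\mathrm{init}}\le M^2$, multiplicative decrease $1-1/(4n^3)$ per Type~I/III phase, at most $n$ Type~II phases, a bit-size lower bound on nonzero $\Phi$ of $2^{-O(\log n+n\log U+\log M)}$, and $O(n)$ iterations $\times$ $n$ max-flows per balanced-flow computation per phase. The only genuine (though minor) divergence is in deriving the lower bound on $\Phi$: you go through the total surplus $M-\sum_i r_i-\sum_j p_j$ and Cauchy--Schwarz, whereas the paper bounds a single nonzero coordinate $\gamma_i$ directly; both rest on the same bit-size lemma for the prices and returns and give the same exponent, so this is a cosmetic variation. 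Your explicit bookkeeping of the $Z$-insertion and $Z$-removal events in the per-phase max-flow count is more careful than the paper's one-line justification, and your flagging of the bit-complexity of $r_i$ as the step needing a real argument (since it has no DPSV analogue) matches what the paper asserts somewhat tersely by appeal to the DPSV argument.
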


\begin{proof}

\medskip\noindent
At the start of the algorithm, $r_i=0$ for all $i$, so
$0 \le \gamma_i \le m_i$.  Thus
\[
\Phi_{\mathrm{init}}
\;=\;
\sum_{i} \gamma_i^2
\;\le\;
\sum_i m_i^2
\;\le\;
M^2.
\]
Whenever money is returned, $(m_i-r_i)$ only decreases.  Therefore $\gamma_i$
never increases, and hence
\[
\Phi \;\le\; \Phi_{\mathrm{init}} \;\le\; M^2
\quad\text{throughout the algorithm}.
\]
Thus decreasing money does not worsen the upper bound for Arctic Auction, compared to \cite{DPSV}.

\medskip\noindent
\textbf{Potential drop per phase:}
By Lemma \ref{lem.No-I-phases}, in every Type~I phase,
\[
\Phi_{\mathrm{after}} \;\le\;
\Bigl(1 - \frac{1}{n^3}\Bigr)\Phi_{\mathrm{before}}.
\]
By Lemma \ref{lem.No-III-phases} in every Type~III phase,
\[
\Phi_{\mathrm{after}} \;\le\;
\Bigl(1 - \frac{1}{4n^3}\Bigr)\Phi_{\mathrm{before}}.
\]
By Lemma \ref{lem.TypeII-count}, each Type~II phase removes one buyer permanently, and hence there
are at most $n$ Type~II phases.

Thus for every Type~I or Type~III phase, we have the uniform multiplicative decrease
\begin{equation}
\Phi_{\mathrm{after}}
\;\le\;
\Bigl(1 - \frac{1}{4n^3}\Bigr)\Phi_{\mathrm{before}}.
\label{eq:phase-drop2}
\end{equation}

\medskip\noindent
\textbf{Lower bound on nonzero potential:}
All prices, flows, surpluses, and money-return values arise as solutions of
linear systems of equations with coefficients taken from $\{m_i\}$, $\{u_{ij}\}$
and previous values of $(p,r)$, all of which have bit-size at most
$O(\log n+n\log U+\log M)$ by the same argument used in \cite{DPSV}.
Thus every rational number appearing in the algorithm has denominator of size at
most $2^{O(\log n+n\log U+\log M)}$.

Hence if any $\gamma_i > 0$, then
\[
\gamma_i \;\ge\; 
2^{-O(\log n+n\log U+\log M)},
\]
and therefore
\begin{equation}
\Phi \;\ge\; 
2^{-O(\log n+n\log U+\log M)}.
\label{eq:phi-min2}
\end{equation}
This lower bound does \emph{not} depend on the current total money
$\sum_i (m_i - r_i)$ and therefore remains valid even as money is returned.

\medskip\noindent
\textbf{Bounding the total number of phases and max-flow computations:}
Let $K$ denote the number of Type~I plus Type~III phases.
By repeated application of~\eqref{eq:phase-drop2}, we obtain
\[
\Phi_{\mathrm{final}}
\;\le\;
\Bigl(1 - \frac{1}{4n^3}\Bigr)^K \Phi_{\mathrm{init}}
\;\le\;
\Bigl(1 - \frac{1}{4n^3}\Bigr)^K M^2.
\]
As long as the algorithm has not terminated, $\Phi>0$, and hence by
\eqref{eq:phi-min2},
\[
2^{-O(\log n+n\log U+\log M)}
\;\le\;
\Bigl(1 - \frac{1}{4n^3}\Bigr)^K M^2.
\]
Taking logarithms and using $\ln(1-x)\le -x$, we obtain
\[
K
\;=\;
O\!\Bigl(n^3\bigl(\log n + n\log U + \log M\bigr)\Bigr).
\]

Each phase consists of at most $n$ iterations, since each iteration adds a new
mbpb edge and no such edge can be added twice. Since the balanced-flow procedure
of DPSV requires $n$ max-flow computations (Section~8 of~\cite{DPSV}), each iteration requires at most $O(n)$
max-flow computations.  Therefore each phase uses $O(n^2)$ max-flow calls, giving a total
of $O(n^5(\log n+n\log U+\log M))$ max-flows. 
\end{proof}

\section{Linear Fisher Market with Constant Marginal Costs}
\label{sec.cost}

\begin{definition}
	\label{def.cost}
The {\em Linear Fisher market with constant marginal costs for production} differs from Definition \ref{def.arctic} in two respects: first, there is no upper bound on the amount of goods available for sale, and second, for each good $j$, the production cost incurred by the seller for good $j$ is given by a function $c_j: \Re_+ \rightarrow \Re_+$ with $c_j(y_j) = d_j \cdot y_j$, where $d_j > 0$ is a rational parameter. Because supply is unbounded, we are not seeking a {\em competitive} equilibrium\footnote{This model is not used for central-bank liquidity matters and therefore "arctic auction" has been dropped from the name.}.

Allocations and prices must satisfy: each buyer gets a utility maximizing bundle of goods and the money of each buyer is either spent on goods or is returned. The seller's {\em revenue} is defined to be 
\[ \sum_{i \in B} {m_i}  \ \ - s .     \] 
For each good $j$, let $y_j = \sum_{i \in B} {x_{ij}}$, i.e., the amount of good $j$ sold. Then the {\em profit}  of the seller is defined to be:
\[ \sum_{i \in B} {m_i}  \ \ - s \ \ - \sum_{j \in G} {c_j(y_j)}.     \] 
 \end{definition}

\bigskip

We will show that the following variant of program (\ref{eq.arctic}) captures optimal allocations and prices for the market given in Definition \ref{def.cost}.

	\begin{maxi}
		{}  {\sum_{i \in B}  {m_i (\log {(w_{i} (x_i)} + s_i)} - s - \sum_{j \in G} {d_j \cdot y_j}}  
			{\label{eq.cost}}
		{}
		\addConstraint{{\sum_{i \in B} {x_{ij}}} -y_j } {= 0 \quad}{\forall j \in G}
		\addConstraint{{\sum_{i \in B} {s_{i}}} -s } {= 0}
		\addConstraint{x_{ij}}{\geq 0 \quad}{\forall i \in B, j \in G}
		\addConstraint{s_{i}}{\geq 0 \quad}{\forall i \in B}
	\end{maxi}

Let $p_j$ and $\lambda$ be the dual variables corresponding to the first and second constraints of (\ref{eq.cost}).  Optimal solutions to the primal and dual variables must satisfy KKT conditions, in addition to the constraints of (\ref{eq.cost}):

\begin{enumerate}
	\item $\forall j \in G: \ \ d_j - p_j \geq 0$.
	\item $\forall j \in G: \ \ y_j > 0 \ \ \implies \ \ d_j = p_j  .$ 
		\item $1 - \lambda \geq 0$.
	\item $s > 0 \ \ \implies \ \  \lambda = 1$ 
	\item $$ \forall \ i \in B,  \forall j \in G: \ \  {{u_{ij}} \over {p_j}} \leq {{w_i(x_i) + s_i} \over {m_i}} . $$
	\item $$ \forall \ i \in B,  \forall j \in G: \ \ x_{ij} > 0 \ \ \implies \ \ {{u_{ij}} \over {p_j}} = {{w_i(x_i) + s_i} \over {m_i}} . $$
	\item $$ \forall \ i \in B: \ \ \lambda \geq {{m_i} \over {w_i(x_i) + s_i}} .$$
	\item $$ \forall \ i \in B: \ \ s_i > 0 \ \ \implies \ \  \lambda = {{m_i} \over {w_i(x_i) + s_i}} .$$
\end{enumerate}


It is easy to see that the statements given in Lemma \ref{lem.facts} hold for this market as well and optimal primal and dual solutions to convex program (\ref{eq.cost}) satisfy them via the same proof as given in Lemma \ref{lem.arctic}.

\begin{theorem}
	\label{thm.cost}
There exist optimal primal and dual solutions to convex program (\ref{eq.cost}) which form a solution to a Fisher market with constant marginal costs for production. At such a solution, the seller's revenue is maximized and his profit will be zero. 
\end{theorem}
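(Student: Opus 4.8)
The plan is to mirror the proofs of Lemma~\ref{lem.arctic} and Theorem~\ref{thm.arctic} for the equilibrium assertion, and then read the profit and revenue claims off KKT conditions (1) and (2). Since (as the text notes) the statements of Lemma~\ref{lem.facts} hold for this market as well, the ``market clears'' argument of Theorem~\ref{thm.arctic} transfers almost verbatim; the only change is that there is no supply constraint to verify, so one needs only that each buyer's money is spent on goods or returned and that each buyer gets a utility-maximizing bundle. As in Theorem~\ref{thm.arctic}, split into three cases. If $s_i=0$, condition~(6) gives $p_j=m_iu_{ij}/w_i(x_i)$ on every bought good, so $\sum_j x_{ij}p_j=m_i$. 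If $s_i>0$ and $i$ buys some good, then (by Lemma~\ref{lem.facts}) $\alpha_i=1$, $w_i(x_i)+s_i=m_i$, and $u_{ij}=p_j$ on bought goods, so the money $i$ spends on goods is exactly $w_i(x_i)=m_i-s_i$ and $s_i$ is the unspent remainder. If $i$ buys nothing, then $s_i=m_i$. In every case the bought goods are mbpb, and returned money is worth as much as (when $\alpha_i=1$) or more than (when $\alpha_i\le 1$) the mbpb goods, so the bundle is utility-maximizing. Also $p_j>0$ for all $j$ (mild assumption plus condition~(5)) and $p_j\le d_j$ by condition~(1).

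For the profit: from the case analysis, the money each buyer spends on goods equals $m_i-s_i$, so the seller's revenue is
\[
\sum_{i\in B}(m_i-s_i)\;=\;\sum_{i\in B}\sum_{j\in G}x_{ij}p_j\;=\;\sum_{j\in G}p_j\,y_j,
\]
and hence the profit equals $\sum_{j\in G}(p_j-d_j)\,y_j$. Every term vanishes: if $y_j>0$ then $p_j=d_j$ by KKT condition~(2), and if $y_j=0$ the term is $0$. So the profit is exactly $0$.

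For revenue maximality, set $c_i=\min\{\,1,\ \min_j d_j/u_{ij}\,\}$. The structural fact to establish is that at \emph{any} market solution the utility of buyer $i$ is forced to $w_i(x_i)+s_i=m_i/c_i$: seller optimality ($p_j\le d_j$ always, $p_j=d_j$ whenever $y_j>0$) makes every bought good $k$ satisfy $u_{ik}/d_k=\alpha_i$; if $c_i<1$ then necessarily $\alpha_i>1$, so $s_i=0$ and buyer $i$ spends all of $m_i$, whereas if $c_i=1$ then $\alpha_i\le 1$ and buyer $i$ can place money on a good $j$ only when $u_{ij}=p_j=d_j$. Hence, in every market solution the revenue is at most $\sum_i m_i$ over the buyers $i$ with $c_i<1$, plus $\sum_i m_i$ over those with $c_i=1$ for which $u_{ij}=d_j$ for some $j$. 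Finally, the program's objective decomposes over buyers (the $y_j$ are unconstrained above and $s=\sum_i s_i$), each buyer's term being maximized precisely when $w_i(x_i)+s_i=m_i/c_i$; among the assignments realizing that utility one is free to pick the one in which every buyer with $c_i<1$, and every buyer with $c_i=1$ having $u_{ij}=d_j$ for some $j$, spends all of $m_i$ on such a good (take $p_j=d_j$ for all $j$ and $\lambda=1$). This is an optimal solution of the program, hence a market solution by the first part, has profit $0$ by the second, and attains the revenue bound, so its revenue is the maximum possible.

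The market-clearing and zero-profit parts are routine adaptations of the Arctic-Auction arguments; the delicate point is revenue maximality, because the convex program does not by itself dictate how a buyer splits $m_i$ between goods and returned money (already visible with a single buyer whose $u_{11}$ equals $d_1$). Handling this requires separately (i) identifying the invariant equilibrium utility $m_i/c_i$, (ii) showing it caps the revenue of every market solution, and (iii) exhibiting a particular program-optimal solution that realizes the cap.
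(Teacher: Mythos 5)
Your buyer-optimality and zero-profit arguments are essentially the same as the paper's: the three-case split on $\alpha_i$ is how the paper establishes that the KKT solution is a market solution, and the paper's zero-profit argument is exactly your observation that profit $=\sum_{j\in G}(p_j-d_j)\,y_j$ vanishes term-by-term via KKT condition~(2). You also correctly notice the tiebreaking subtlety when $\alpha_i=1$: the convex program is indifferent to how such a buyer splits $m_i$ (you can check that the objective penalty $s_i+\sum_j d_jx_{ij}=m_i$ is a constant), so a revenue-maximizing optimum must be chosen from among the optimal primal solutions. The paper handles this with a single line (``we will assume that $i$ spends all her money on goods''); you make it explicit.

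Where you diverge, and where a gap opens, is the revenue-maximality argument. You introduce $c_i=\min\{1,\min_j d_j/u_{ij}\}$ and assert that ``at \emph{any} market solution the utility of buyer $i$ is forced to $w_i(x_i)+s_i=m_i/c_i$,'' deriving this from ``seller optimality ($p_j\le d_j$ always, $p_j=d_j$ whenever $y_j>0$).'' But Definition~\ref{def.cost} imposes only buyer optimality and money conservation on a market solution; there is no seller-optimality or price-equals-marginal-cost requirement in that definition. Seller optimality ($p_j\le d_j$, with equality when $y_j>0$) is a consequence of KKT conditions~(1) and~(2), so your invariant $w_i(x_i)+s_i=m_i/c_i$ holds at KKT solutions, not at arbitrary market solutions. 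Indeed, a market solution with artificially low prices would make $\alpha_i\ge 1$ for more buyers and hence increase revenue (while crashing profit), so the upper bound you claim on the revenue of ``every market solution'' is false as stated. The paper avoids this by making a weaker, more local claim: it computes revenue at the chosen optimal primal by cases ($\alpha_i>1$ forces $s_i=0$; $\alpha_i<1$ forces $s_i=m_i$; $\alpha_i=1$ is free and set to $s_i=0$), implicitly meaning ``maximized among optimal primal solutions.'' Your steps (ii)--(iii) would be fine if you restricted the quantifier to solutions that also satisfy KKT conditions~(1)--(2), matching what the paper actually proves; as written, quantifying over all market solutions is a genuine over-claim.
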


\begin{proof}
Three facts need to be established.  

 {\bf 1). Buyers get optimal bundles:}  
 By Lemma \ref{lem.facts}, each good bought by $i$ is an mbpb good for $i$, and   if money is returned, it provides at least as much utility as mbpb goods. Furthermore, for each buyer $i$, all her money is either spent on mbpb goods or is returned, i.e., 
 \[ m_i =  \sum_{j \in G} {u_{ij} x_{ij}} + s_i .\]

{\bf 2). The seller maximizes her revenue:} 
We will consider three cases:

{\bf Case 1:} $\alpha_i > 1$. If so, $i$'s mbpb goods give her more utility than money. Therefore in every optimal primal, $s_i = 0$, i.e., $i$ spends all her money on goods and it becomes the seller's revenue.

 {\bf Case 2:} $\alpha_i = 1$. If so, $i$ gets the same utility from her mbpb goods as from  money. In this case, we will assume that $i$ spends all her money on goods and it becomes the seller's revenue.
 
{\bf Case 3:} $\alpha_i < 1$. If so, $i$ derives more utility from money than her mbpb goods. Therefore in every optimal primal, $s_i = m_i$. 

{\bf 3). The seller's profit will be zero:} By KKT condition (2), if $y_j > 0$, $p_j = d_j$. As a result, all the revenue made from selling good $j$, i.e., $p_j \cdot y_j$, will be used for paying cost, since $d_j \cdot y_j = p_j \cdot y_j$. Therefore seller's profit will be zero.
\end{proof}

\bigskip

\begin{theorem}
	\label{thm.cost-RCP}
Convex program (\ref{eq.cost}) is a rational convex program. 
\end{theorem}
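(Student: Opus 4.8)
The plan is to mirror the proof of Theorem~\ref{thm.RCP} essentially verbatim, since the KKT conditions for program~(\ref{eq.cost}) are nearly identical to those for~(\ref{eq.arctic}): the only new ingredient is the production variables $y_j$ and the cost term, which enter linearly, so we show again that equilibrium allocations and prices satisfy a linear system with a polynomial-size rational solution. First I would assume all parameters $u_{ij}$, $m_i$, $d_j$ are rational. The key observation is that the new KKT condition (2) forces $p_j = d_j$ whenever $y_j > 0$; since $d_j$ is a rational parameter, this simply pins down each \emph{active} price to a given rational number rather than leaving it as a free variable to be solved for. This actually makes the linear system \emph{easier} than in Theorem~\ref{thm.RCP}.

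The construction of the linear system proceeds as follows. Guess the support: which $x_{ij}$ are strictly positive (say $k$ of them), which $s_i$ are strictly positive (say $\ell$ of them), and which goods $j$ are produced, i.e.\ have $y_j > 0$ (say the set $G^+$). For $j \in G^+$, set $p_j \defeq d_j$ (a known rational); for $j \notin G^+$, we have $y_j = 0$ and the price is irrelevant to feasibility of the remaining conditions except through condition (5), which for such $j$ only requires $p_j$ large enough, so it can be set to any sufficiently large rational and discarded from the system. As in Theorem~\ref{thm.RCP}, introduce reciprocal variables $q_j = 1/p_j$ for $j \in G^+$ (but these are now determined: $q_j = 1/d_j$), so effectively the unknowns are the $k$ positive $x_{ij}$'s, the $\ell$ positive $s_i$'s, and the $y_j$'s for $j \in G^+$. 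The equations are: condition (1) of the problem definition (first constraint of~(\ref{eq.cost})), namely $\sum_i x_{ij} = y_j$ for each $j \in G^+$; condition (6) of the KKT list for each positive $x_{ij}$, which with $p_j = d_j$ known becomes linear in $w_i(x_i)+s_i$; and condition (8) with $\lambda$ replaced by $1$ for each positive $s_i$. As before, if all $s_i = 0$ and $s = 0$, then $\lambda$ appears only in condition (7) and is satisfied by taking $\lambda = 1$. One checks the count of equations matches the count of unknowns (or that the system is consistent with the guessed equilibrium), exactly as in Theorem~\ref{thm.RCP}.

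The conclusion follows by the standard RCP argument: a consistent linear system with rational coefficients of bounded bit-complexity has a rational solution of polynomially bounded bit-size (by Cramer's rule / the theory of linear systems), and among the finitely many support guesses, the one corresponding to an actual optimal solution (which exists and is characterized by the KKT conditions via Theorem~\ref{thm.cost}) yields the desired rational optimum.

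I do not expect a serious obstacle here; the main thing to be careful about is the bookkeeping of which variables are genuinely free versus pinned down by $p_j = d_j$, and making sure the treatment of non-produced goods (where the price is underdetermined) does not break rationality --- but since those prices can be fixed to convenient rationals without affecting optimality of the allocation, this is routine. The proof is essentially a one-paragraph adaptation of Theorem~\ref{thm.RCP}, and indeed the excerpt already notes that Lemma~\ref{lem.facts} and its proof carry over unchanged, so the structural groundwork is in place.
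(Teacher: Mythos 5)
Your proposal is correct and takes essentially the same approach as the paper: both exploit that KKT condition (2) pins $p_j = d_j$ for every produced good, making prices known rationals, and both then reuse the linear system from Theorem~\ref{thm.RCP} over the guessed positive $x_{ij}$'s and $s_i$'s. Your write-up is more detailed (explicitly handling the non-produced goods and the $y_j$ variables, which are anyway determined by $y_j = \sum_i x_{ij}$), but the underlying argument is identical to the paper's.
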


\begin{proof}
The proof is easier than the one for Theorem \ref{thm.RCP} because for each good $j$ that is sold to a non-zero extent, $p_j = d_j$, establishing its rationality. Corresponding to non-zero $x_{ij}$s and $s_i$s, we will use the linear equations given in Theorem \ref{thm.RCP} to obtain the required linear system.  
\end{proof}

\subsection{Efficient Greedy Algorithm}
\label{sec.cost-alg}

The algorithm for obtaining a solution is straightforward. We set that the price $p_j$ of each good $j$ to $d_j$. If for a buyer $i$, $\alpha_i < 1$, i.e., the price of each good $j$ is greater than her upper bound of $u_{ij}$, then we return money $m_i$ to $i$ and remove her from consideration. 

For the remaining buyers, $\alpha_i \geq 1$. For each such buyer $i$, we find $M_i$, the set of her mbpb goods. Allocate goods $j \in M_i$ worth $m_i$ money arbitrarily, i.e., $x_{ij} > 0$ only if $j \in M_i$ and $\sum_{j \in G} {x_{ij} \cdot p_i} = m_i$. These prices and allocations will constitute a solution.

\section{Discussion}
\label{sec.Discussion}

Following up on \cite{DPSV}, Orlin gave a strongly polynomial time algorithm for computing an equilibrium for the linear Fisher market \cite{Orlin-Fisher}. Is such an extension possible for the Arctic Auction as well?  

\section{Acknowledgements}

I wish to express my gratitude to Paul Klemperer and Edwin Lock for their generous help which got me started on this project. Paul introduced me to the elegant notion of Arctic Auction and Edwin informed me about several useful references.


	\bibliographystyle{alpha}
	\bibliography{refs}

\end{document}